\begin{document}

\newtheorem{lemma}{Lemma}
\newtheorem{theorem}{Theorem}
\newtheorem{corollary}{Corollary}
\newtheorem{example}{Example}
\newtheorem{assumption}{Assumption}
\newtheorem{remark}{Remark}

\newdef{definition}{Definition}
\newdef{scenario}{Scenario}

% ****************** TITLE ****************************************

\title{Multiple Query Optimization on the D-Wave 2X Adiabatic Quantum Computer}

%\thanks{Immanuel Trummer is supported by a European Google PhD fellowship. This work was supported by a grant for computation time on a quantum annealer by Universities Space Research Association.}

% possible, but not really needed or used for PVLDB:
%\subtitle{[Extended Abstract]
%\titlenote{A full version of this paper is available as\textit{Author's Guide to Preparing ACM SIG Proceedings Using \LaTeX$2_\epsilon$\ and BibTeX} at \texttt{www.acm.org/eaddress.htm}}}

% ****************** AUTHORS **************************************

% You need the command \numberofauthors to handle the 'placement
% and alignment' of the authors beneath the title.
%
% For aesthetic reasons, we recommend 'three authors at a time'
% i.e. three 'name/affiliation blocks' be placed beneath the title.
%
% NOTE: You are NOT restricted in how many 'rows' of
% "name/affiliations" may appear. We just ask that you restrict
% the number of 'columns' to three.
%
% Because of the available 'opening page real-estate'
% we ask you to refrain from putting more than six authors
% (two rows with three columns) beneath the article title.
% More than six makes the first-page appear very cluttered indeed.
%
% Use the \alignauthor commands to handle the names
% and affiliations for an 'aesthetic maximum' of six authors.
% Add names, affiliations, addresses for
% the seventh etc. author(s) as the argument for the
% \additionalauthors command.
% These 'additional authors' will be output/set for you
% without further effort on your part as the last section in
% the body of your article BEFORE References or any Appendices.

\numberofauthors{1} %  in this sample file, there are a *total*
% of EIGHT authors. SIX appear on the 'first-page' (for formatting
% reasons) and the remaining two appear in the \additionalauthors section.

\author{
% You can go ahead and credit any number of authors here,
% e.g. one 'row of three' or two rows (consisting of one row of three
% and a second row of one, two or three).
%
% The command \alignauthor (no curly braces needed) should
% precede each author name, affiliation/snail-mail address and
% e-mail address. Additionally, tag each line of
% affiliation/address with \affaddr, and tag the
% e-mail address with \email.
%
\alignauthor
Immanuel Trummer and Christoph Koch\\
			 \email{\{firstname\}.\{lastname\}@epfl.ch}\\
       \affaddr{\'Ecole Polytechnique F\'ed\'erale de Lausanne}
}

% Just remember to make sure that the TOTAL number of authors
% is the number that will appear on the first page PLUS the
% number that will appear in the \additionalauthors section.

% horizontal/vertical distance between two neighboring unit cells
\newlength{\unitCellXdistance}
\setlength{\unitCellXdistance}{1.5cm}
\newlength{\unitCellYdistance}
\setlength{\unitCellYdistance}{2.25cm}
% horizontal/vertical distance between two neighboring qubits in the same unit cell
\newlength{\qubitXdistance}
\setlength{\qubitXdistance}{0.75cm}
\newlength{\qubitYdistance}
\setlength{\qubitYdistance}{0.5cm}
% X/Y offset for next unit cell
\newlength{\cellXoffset}
\newlength{\cellYoffset}
% minimal width of circle representing single qubit
\newlength{\minQubitWidth}
\setlength{\minQubitWidth}{4mm}

% sets variables determining positioning of next qubit cell; first parameter is x position;
% second parameter is y position
\def\positionUnitCell#1#2{
\setlength{\cellXoffset}{#1\unitCellXdistance}
\setlength{\cellYoffset}{#2\unitCellYdistance}
}

% sets the qubit lables of nex qubit cell; parameters are the labels in order of qubit indices
% (first left qubit column top-down, then right column top-down)
\def\labelUnitCell#1#2#3#4#5#6#7#8{
\def\qubitOneLabel{#1}
\def\qubitTwoLabel{#2}
\def\qubitThreeLabel{#3}
\def\qubitFourLabel{#4}
\def\qubitFiveLabel{#5}
\def\qubitSixLabel{#6}
\def\qubitSevenLabel{#7}
\def\qubitEightLabel{#8}
}

% draws tikz picture of unit qubit cell; first parameter is node name prefix;
% remaining parameters are node styles for each qubit
\def\drawUnitCell#1#2#3#4#5#6#7#8#9{
% left column
\node (#1left1) [qubitGeneric,chain#2] at (\cellXoffset,\cellYoffset-\qubitYdistance*0) {};
\node at (\cellXoffset,\cellYoffset-\qubitYdistance*0) {\qubitOneLabel};
\node (#1left2) [qubitGeneric,chain#3] at (\cellXoffset,\cellYoffset-\qubitYdistance*1) {};
\node at (\cellXoffset,\cellYoffset-\qubitYdistance*1) {\qubitTwoLabel};
\node (#1left3) [qubitGeneric,chain#4] at (\cellXoffset,\cellYoffset-\qubitYdistance*2) {};
\node at (\cellXoffset,\cellYoffset-\qubitYdistance*2) {\qubitThreeLabel};
\node (#1left4) [qubitGeneric,chain#5] at (\cellXoffset,\cellYoffset-\qubitYdistance*3) {};
\node at (\cellXoffset,\cellYoffset-\qubitYdistance*3) {\qubitFourLabel};
% right column
\node (#1right1) [qubitGeneric,chain#6] at (\cellXoffset+\qubitXdistance,\cellYoffset-\qubitYdistance*0) {};
\node at (\cellXoffset+\qubitXdistance,\cellYoffset-\qubitYdistance*0) {\qubitFiveLabel};
\node (#1right2) [qubitGeneric,chain#7] at (\cellXoffset+\qubitXdistance,\cellYoffset-\qubitYdistance*1) {};
\node at (\cellXoffset+\qubitXdistance,\cellYoffset-\qubitYdistance*1) {\qubitSixLabel};
\node (#1right3) [qubitGeneric,chain#8] at (\cellXoffset+\qubitXdistance,\cellYoffset-\qubitYdistance*2) {};
\node at (\cellXoffset+\qubitXdistance,\cellYoffset-\qubitYdistance*2) {\qubitSevenLabel};
\node (#1right4) [qubitGeneric,chain#9] at (\cellXoffset+\qubitXdistance,\cellYoffset-\qubitYdistance*3) {};
\node at (\cellXoffset+\qubitXdistance,\cellYoffset-\qubitYdistance*3) {\qubitEightLabel};
% plot intra-cell connections
\foreach \left in {1,2,3,4} {
	\foreach \right in {1,2,3,4} {
		\draw (#1left\left.east) -- (#1right\right.west);
	}
}
}

% Draw horizontal connections between two neighboring qubit cells; first parameter is name of left cell; 
% second parameter is name of right cell
\def\drawHorizontalConnections#1#2{
\foreach \index in {1,2,3,4} {
	\draw (#1right\index.north) to[in=170,out=10] (#2right\index.north);
}
}

% Draw horizontal connections between two neighboring qubit cells
% first parameter is name of left cell
% second parameter is name of right cell
% third parameter is style
\def\drawStyledHorizontalConnections#1#2#3{
\foreach \index in {1,2,3,4} {
	\draw[#3] (#1right\index.north) to[in=170,out=10] (#2right\index.north);
}
}

% Draw vertical connections between two neighboring qubit cells; first parameter is name of left cell; 
% second parameter is name of right cell
\def\drawVerticalConnections#1#2{
\foreach \index in {1,2,3,4} {
	\draw (#1left\index.west) to[in=100,out=260] (#2left\index.west);
}
}

% Draw vertical connections between two neighboring qubit cells; 
% first parameter is name of left cell
% second parameter is name of right cell
% third parameter is style
\def\drawStyledVerticalConnections#1#2#3{
\foreach \index in {1,2,3,4} {
	\draw[#3] (#1left\index.west) to[in=100,out=260] (#2left\index.west);
}
}

% configure spaces
\newlength{\horizontalSubfigureSeparator}
\setlength{\horizontalSubfigureSeparator}{0.25cm}

% configure styles
\tikzset{qubitGeneric/.style={minimum size=\minQubitWidth,shape=circle,font=\scriptsize,draw=black},
chain1/.style={fill=blue!10},chain2/.style={fill=blue!25},chain3/.style={fill=blue!50},chain4/.style={fill=blue!65},
chain5/.style={fill=red!10},chain6/.style={fill=red!25},chain7/.style={fill=red!50},chain8/.style={fill=red!75},
chain9/.style={fill=green!10},chain10/.style={fill=green!25},chain11/.style={fill=green!50},chain12/.style={fill=green!75},
chain13/.style={fill=yellow!10},chain14/.style={fill=yellow!25},chain15/.style={fill=yellow!50},chain16/.style={fill=yellow!75},
chain17/.style={fill=brown!10},chain18/.style={fill=brown!25},chain19/.style={fill=brown!50},chain20/.style={fill=brown!75},
chain21/.style={fill=purple!10},chain22/.style={fill=purple!25},
chain24/.style={fill=gray!10},chain25/.style={fill=gray!25},
chain28/.style={fill=orange!10},chain29/.style={fill=orange!25},
chainUnused/.style={fill=white},chainBroken/.style={fill=black}}

\maketitle

\urlstyle{same}

%\expandafter\renewcommand\csname @makefntext\endcsname[1]{\makefootmark\renewcommand{\UrlFont}{\footnotesize} #1}

\begin{abstract}
The D-Wave adiabatic quantum annealer solves hard combinatorial optimization problems  leveraging quantum physics. The newest version features over 1000 qubits and was released in August~2015. We were given access to such a machine, currently hosted at NASA Ames Research Center in California, to explore the potential for hard optimization problems that arise in the context of databases.

In this paper, we tackle the problem of multiple query optimization (MQO). We show how an MQO problem instance can be transformed into a mathematical formula that complies with the restrictive input format accepted by the quantum annealer. This formula is translated into weights on and between qubits such that the configuration minimizing the input formula can be found via a process called adiabatic quantum annealing. We analyze the asymptotic growth rate of the number of required qubits in the MQO problem dimensions as the number of qubits is currently the main factor restricting applicability. We experimentally compare the performance of the quantum annealer against other MQO algorithms executed on a traditional computer. While the problem sizes that can be treated are currently limited, we already find a class of problem instances where the quantum annealer is three orders of magnitude faster than other approaches. 
\end{abstract}

% The following two commands are all you need in the
% initial runs of your .tex file to
% produce the bibliography for the citations in your paper.
%\bibliographystyle{abbrv}
%\bibliography{style/vldb_style_sample/vldb_sample}  % vldb_sample.bib is the name of the Bibliography in this case
% You must have a proper ".bib" file
%  and remember to run:
% latex bibtex latex latex
% to resolve all references

\section{Introduction}
\label{introSec}

The database area has motivated a multitude of hard optimization problems that probably cannot be solved in polynomial time. Those optimization problems become harder as data processing systems become more complex. This makes it interesting to explore also unconventional optimization approaches. In this paper, we explore the potential of quantum computing for a classical database-related optimization problem, the problem of multiple query optimization (MQO)~\cite{Sellis1988a}. We were granted a limited amount of computation time on a D-Wave 2X adiabatic quantum annealer, currently hosted at NASA Ames Research Center in California. This device is claimed to exploit the laws of quantum physics~\cite{Boixo2012} in the hope to solve NP-hard optimization problems faster than traditional approaches. The machine supports a very restrictive class of optimization problems while it is for instance not capable of running Shor's algorithm~\cite{Shor1994} for factoring large numbers\footnote{\url{http://www.dwavesys.com/blog/2014/11/response-worlds-first-quantum-computer-buyers-guide}}. We will show how instances of the multiple query optimization problem can be brought into a representation that is suitable as input to the quantum annealer. We also report results of an experimental evaluation that compares the time it takes to solve MQO problems on the quantum annealer to the time taken by algorithms that run on a traditional computer. We believe that this is the first paper featuring an experimental evaluation on a quantum computer ever published in the database community.

The quantum annealer, produced by the Canadian company D-Wave\footnote{\url{http://www.dwavesys.com/}}, uses \textit{qubits} instead of bits. While bits have a deterministic value (either 0 or 1) at each point in time during a computation, a qubit may be put into a superposition of states (0 \textit{and} 1) that would be considered mutually exclusive according to the laws of classical physics. Working with qubits instead of bits could in principle allow faster optimization than on a classical computer~\cite{Aaronson2013}. Thinking of qubit superposition as a specific form of parallelization is certainly simplifying but still gives a first intuition for why this is possible. We provide more explanations on quantum computing and on the quantum annealer in Section~\ref{backgroundSec}.

The quantum annealer that we were experimenting with has a net worth of around 15~million US dollars. This price might make main stream adoption seem illusory in the near-term future. However, the company D-Wave is currently considering flexible provisioning models allowing users to buy computation time instead of the hardware\footnote{\url{http://spectrum.ieee.org/podcast/computing/hardware/dwave-aims-to-bring-quantum-computing-to-the-cloud}}. In this scenario, users would use the machine remotely, in a similar way as we did in our experiments. As near-optimal solutions to hard problems can usually be found within milliseconds (see Section~\ref{experimentalSec}), this provisioning model might allow optimization at an affordable rate per instance. Those are some of the factors that encourage us to explore the potential of quantum computing already at this point in time.

The D-Wave adiabatic quantum annealer has been the subject of controversial discussions in the scientific community. Those discussions have focused on two questions: whether quantum effects play indeed a significant role during the optimization process~\cite{Albash2015, Boixo2012, Boixo2014, Lanting2014, Shin2014a, Smolin2013} and whether the performance is significantly better than the one of classical computers~\cite{Hen2015, King2015, King2015a, Rønnow2014}. Recent publications seem to answer the first question positively~\cite{Albash2015, Boixo2014, Lanting2014}. The answer to the second question depends apparently on the specific class of problems considered, leading for instance to different conclusions for range-limited Ising problems~\cite{King2015a} than for Ising problems without weight limits~\cite{Hen2015}. Solving problem classes that are not natively supported by the quantum annealer requires transformation steps which add a problem class-specific overhead in the problem representation size that might prevent the quantum annealer from solving non-trivial instances due to its limited number of qubits. Our paper adds to the discussion concerning the second question by providing a mapping algorithm and experimental results for a specific database-related optimization problem. 

Prior work on MQO~\cite{Bayir2007, Cosar1993a, Dalvi2003, Dokeroglu, Dokeroglu2015, Dokeroglu2015, Kementsietsidis2008, Le2012, Mistry2000, Roy1999, Sellis1988a, Shim1994} did not consider the potential of quantum computing. Prior publications in the area of quantum computing~\cite{Bian2013, Gaitan2012, Grover1996, Lucas2014, Perdomo-Ortiz2015, Venturelli2015, Williams} did not treat the MQO problem.

\begin{algorithm}[t!]
\renewcommand{\algorithmiccomment}[1]{// #1}
\begin{algorithmic}[1]
\State \Comment{Solves multiple query optimization problem $M$}
\Function{QuantumMQO}{$M$}
\State \Comment{Map MQO problem to logical energy formula}
\State $lef\gets$\Call{LogicalMapping}{$M$}
\State \Comment{Map logical into physical energy formula}
\State $pef\gets$\Call{PhysicalMapping}{$lef$}
\State \Comment{Minimize formula on quantum computer}
\State $b_i\gets$\Call{QuantumAnnealing}{$pef$}
\State \Comment{Transform physical into logical solution}
\State $X_p\gets$\Call{PhysicalMapping$^{-1}$}{$b_i$}
\State \Comment{Transform logical solution to MQO solution}
\State $P_e\gets$\Call{LogicalMapping$^{-1}$}{$X_p$}
\State \Comment{Return best set of query plans to execute}
\State \Return{$P_e$}
\EndFunction
\end{algorithmic}
\caption{How to solve multiple query optimization on an adiabatic quantum annealer.}
\label{mainAlg}
\end{algorithm}

Algorithm~\ref{mainAlg} shows the high-level approach by which we obtain solutions to MQO problem instances from a quantum annealer. The goal of MQO is to select the optimal combination of query plans to execute in order to minimize execution cost for a batch of queries. Given an MQO problem instance $M$, we introduce binary variables $X_p$ for each available query plan $p$ that indicate whether the corresponding plan is executed. We transform the given MQO instance into an energy formula (the term derives from the fact that the quantum annealer translates such formulas into energy levels) on those variables that becomes minimal for a variable assignment representing an optimal solution to the initial MQO problem $M$. We call the variables $X_p$ the logical variables to indicate that they cannot yet be represented by single qubits within the qubit matrix of the quantum annealer.  We call the transformation the \textit{logical mapping} and the resulting formula the \textit{logical energy formula}.

The physical mapping transforms the logical energy formula, defined in the variables $X_p$, into a \textit{physical energy formula} that uses the binary variables $b_i$. Each variable $b_i$ is associated with one specific, physical qubit of the quantum annealer. Finding a value assignment for the variables $b_i$ which minimizes the physical energy formula is an NP-hard problem. We use the quantum annealer to solve it. All other transformations depicted in Algorithm~\ref{mainAlg} have polynomial complexity and are executed on a classical computer. 

Based on the solution returned by the quantum annealer, the value assignment to the variables $b_i$ which minimizes the physical energy formula, we transform the solution to the physical energy formula into a solution to the logical energy formula. Finally, we transform the solution to the logical energy formula into a solution to the original MQO problem which is the optimal set $P_e$ of query plans to execute. 

MQO problems cannot be solved with our approach if the number of qubits required by the physical energy formula exceeds the number of qubits available on the quantum annealer. Albeit doubling the number of qubits compared to the predecessor model, the number of qubits is with slightly over one thousand qubits still very limited on the D-Wave 2X that we experimented with. Correspondingly, the limited number of qubits is in practice the most important factor restricting the size of the problem instances that can be treated with the quantum annealer. For that reason, we analyze the ``complexity'' of our mapping algorithm in terms of the asymptotic growth rate of the number of required qubits as a function of the MQO problem dimensions. This approach is common in the area of quantum annealing~\cite{Lucas2014, Williams}. We find that the number of qubits in the physical energy formula grows quadratically in the number of plans per query and at least linearly in the number of queries.

In our experimental evaluation, we compare our approach based on quantum annealing against classical optimization algorithms executed on traditional computers. We compare against classes of algorithms that have been proposed for MQO in prior publications and include integer linear programming, genetic algorithms, and simple greedy heuristics. While the number of available qubits severely limits the class of non-trivial MQO problems that can be treated efficiently on the quantum annealer, we also find a class of problems where the quantum annealer discovers near-optimal solutions at least 1000 times faster than classical approaches.

In summary, our original scientific contributions in this paper are the following:

\begin{itemize}
\item We map MQO problem instances into a representation that can be solved on a quantum annealer. 
\item We analyze the complexity of our mapping method in terms of the asymptotic number of required qubits as a function of the MQO problem dimensions.
\item We experimentally compare the D-Wave 2X quantum annealer against competing approaches for MQO.
\end{itemize}

The remainder of this paper is organized as follows. In Section~\ref{backgroundSec}, we give a short introduction to quantum computing and the quantum annealer. In Section~\ref{modelSec}, we introduce our formal problem model for MQO. We describe the logical mapping in Section~\ref{mappingSec} and the physical mapping in Section~\ref{embeddingSec}. We formally prove correctness of our mapping and analyze the asymptotic complexity in Section~\ref{analysisSec}. In Section~\ref{experimentalSec}, we evaluate our approach experimentally. We discuss related work in Section~\ref{relatedSec} and conclude in Section~\ref{conclusionSec}.

\section{Quantum Computing}
\label{backgroundSec}

We give a short introduction to quantum computing in general and to the specific realization inside the D-Wave quantum annealer. Our goal is to provide the reader with a rough intuition while we simplify many of the details. A detailed introduction to those complex topics is beyond the scope of this paper and we refer interested readers to specialized publications~\cite{Aaronson2013}.

Quantum mechanics describes physical processes at extremely small scale. The laws of quantum mechanics do not match our intuition since our intuition is formed by the macroscopic world. For instance, extremely small particles may at the same time adopt two states that are mutually exclusive according to our normal intuition. 

Quantum computers~\cite{Aaronson2013} are machines that harness quantum physics to potentially achieve speedups over classical computers. Classical computers use bits that are in either one of two states (1 or 0); quantum computers use qubits that can at the same time be set to 1 \textit{and} to 0, a state that we call \textit{superposition}. This allows quantum computers to explore many alternative computational branches at the same time and there are problems (e.g., prime factorization~\cite{Shor1994}) for which quantum algorithms provide an exponential speedup over the best currently known classical algorithms.

The first commercially available machine claimed to harness quantum effects to speed up optimization is the quantum annealer by D-Wave Systems. In order to use the D-Wave quantum annealer, each optimization problem must be represented as a mathematical function with binary variables. The D-Wave computer aims to find the variable value assignments minimizing the given function. 

More precisely, the D-Wave computer minimizes sums of terms that are either linear or quadratic in the output variables. This problem model corresponds to the quadratic unconstrained binary optimization problem which is NP-hard. The following explanations of the internal workings of the D-Wave machine show that this choice of input format is intrinsically imposed by the D-Wave architecture. 

The D-Wave machine represents binary variables as qubits. Qubits are realized as tiny electric circuits. Those circuits are cooled down to a temperature of 13~millikelvin. Quantum effects appear at this temperature and the current may flow at the same time clockwise and counterclockwise within the circuits, thereby representing qubit superposition. The input function that needs to be minimized is translated into magnetic fields affecting single qubits or qubit pairs. Fields affecting single qubits represent linear terms while fields affecting qubit pairs represent quadratic terms. The strength of those magnetic fields is tuned to be proportional to the weights assigned to the corresponding terms in the input function. Thereby we obtain a physical system that minimizes its total energy for qubit states that represent variable assignments minimizing the input function.

The goal of minimizing the input function is translated into the goal of minimizing the energy level within a physical system in which quantum effects are present. In order to reach the minimal energy level (and thereby solving the input problem), the D-Wave computer executes a process called quantum annealing. 

We introduce quantum annealing informally by contrasting it from the simulated annealing algorithm (SA) which is a classic heuristic optimization algorithm. SA simulates thermal annealing in software while D-Wave performs actual quantum annealing in hardware. Both annealing algorithms process an energy function with the goal to find its global minimum. The SA algorithm performs a set of moves in the search space, using evaluations of the given cost function as guidance in the hope to eventually reach a global minimum. The quantum annealing algorithm starts instead with a simplified cost function whose global minimum can be easily calculated. During optimization, the quantum annealing algorithm does not perform moves in the search space but rather transforms the cost function slowly from the initialization function to the cost function of interest. During that process, the quantum annealer is in a superposition of possible states, unlike its deterministic counterpart. If this transformation is executed slowly enough and without disturbances then the quantum annealer is guaranteed to remain within the global minimum throughout the whole transformation~\cite{Farhi2000} which can be read out after annealing terminates. In practice, annealing runs are often disturbed despite all shielding efforts and a multitude of runs must be executed before finding an optimal solution.

\begin{comment}
\begin{figure}[t!]
\centering
\includegraphics[scale=0.6]{images/qubitMatrix}
\caption{TODO: ASK IF PERMITTED\label{matrixFig}Qubit matrix of the D-Wave 2X at Nasa Ames research featuring 1097 working qubits. The black rectangle marks one of 144 unit cells.}
\end{figure}
\end{comment}

\setlength{\unitCellXdistance}{1.5cm}
\setlength{\unitCellYdistance}{1.5cm}
\setlength{\qubitXdistance}{0.75cm}
\setlength{\qubitYdistance}{0.35cm}
\setlength{\minQubitWidth}{1mm}

\begin{figure}[t!]
\centering
\begin{tikzpicture}
% upper left
\positionUnitCell{0}{1}
\labelUnitCell{}{}{}{}{}{}{}{}
\drawUnitCell{ul}{1}{1}{1}{1}{1}{1}{1}{1}
% lower left
\positionUnitCell{0}{0}
\labelUnitCell{}{}{}{}{}{}{}{}
\drawUnitCell{ll}{1}{1}{1}{1}{1}{1}{1}{1}
% upper right
\positionUnitCell{1}{1}
\labelUnitCell{}{}{}{}{}{}{}{}
\drawUnitCell{ur}{1}{1}{1}{1}{1}{1}{1}{1}
% lower right
\positionUnitCell{1}{0}
\labelUnitCell{}{}{}{}{}{}{}{}
\drawUnitCell{lr}{1}{1}{1}{1}{1}{1}{1}{1}
% horizontal connections
\drawHorizontalConnections{ll}{lr}
\drawHorizontalConnections{ul}{ur}
% vertical connections
\drawVerticalConnections{ul}{ll}
\drawVerticalConnections{ur}{lr}
\end{tikzpicture}
\caption{Four neighboring unit cells containing eight qubits each, connected in a Chimera structure.\label{matrixFig}}
\end{figure}
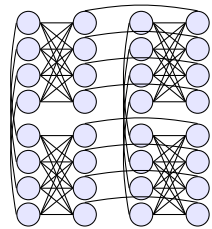

\setlength{\unitCellXdistance}{1.5cm}
\setlength{\unitCellYdistance}{2.25cm}
\setlength{\qubitXdistance}{0.75cm}
\setlength{\qubitYdistance}{0.5cm}
\setlength{\minQubitWidth}{4mm}

We executed our experiments on the D-Wave 2X which was released in August~2015 and is the most recent model in a series of quantum annealers presented by D-Wave with a net price of around 15~million dollars. The number of qubits has been roughly doubling from one model to the next over the past years and the D-Wave 2X features a matrix of 1152 interconnected qubits. The manufacturing process is currently imperfect and only 1097 out of 1152 qubits were fully functional on the machine that we used. Connections between qubits are sparse and form the so called Chimera graph~\cite{Mcgeoch2013a}. Figure~\ref{matrixFig} shows an extract of the Chimera graph structure as it is available on the qubit matrix of the quantum annealer. Qubits are partitioned into so called unit cells. Each unit cell contains eight qubits in two colons and connects each qubit to all four qubits in the opposite colon but not to qubits in the same colon. Qubits in the left colon are connected to their respective counterpart in the qubit cell above and below while qubits in the right colon are connected to their counterparts in the cells to the right and to the left (unless it is the border of the qubit matrix). Each qubit is hence connected to at most six other qubits. The D-Wave 2X uses 144 unit cells.

\section{Formal Model}
\label{modelSec}

We first formalize our problem model of multiple query optimization (MQO). The goal in multiple query optimization is to minimize the joint execution cost for a batch of queries by exploiting possibilities to share computation between different query plans. 

An MQO problem instance is characterized by a set $Q$ of queries. Each query either represents a final result that is requested by the user or an intermediate result that is useful when generating final results. Each query $q\in Q$ is associated with a set of alternative generation plans $P_q$. For a final result, all associated plans must represent methods of generating that result. The generation of intermediate results is optional and the plan set of an intermediate result may contain one plan that represents the possibility of not generating that result.

Set $P=\cup_qP_q$ denotes the set of all considered plans. Each plan $p\in P$ is associated with an execution cost $c_p$. This is the cost of processing the plan without exploiting any previously generated intermediate results. Plans for different queries may however share partial results. It is beneficial to select groups of plans that can share many intermediate results to reduce processing cost. Given two plans $p_1$ and $p_2$ that can share intermediate results, we denote by $s_{p1,p2}>0$ the cost reduction that can be achieved by sharing. Note that our model is not restricted to the case that two plans share an intermediate result. If more than two plans can share an intermediate result, we introduce pair-wise connections between the result and all plans that may share it.

A solution to an MQO problem instance is a subset of plans $P_e\subseteq P$ that are selected for execution. A solution is only valid if exactly one plan is selected for each query and $\forall q\in Q:|P_q\cap P_e|=1$. As discussed before, selecting a plan does not necessarily mean that the corresponding result is generated in case of intermediate results. We denote the accumulated execution cost of a plan set by $C(P_e)=\sum_{p\in P_e}c_p-\sum_{\{p1,p2\}\subseteq P_e}s_{p1,p2}$. A solution is optimal if its execution cost is minimal among all valid solutions. 

We selected an MQO problem model that shortens our following descriptions of the transformation into a quadratic unconstrained binary optimization (QUBO) problem, the formalism that we introduce next. Our MQO model is however equivalent to MQO problem models that were used in prior work\footnote{If each query plan is modeled by a set of tasks~\cite{Sellis1988a} then we make in our model the execution cost of the plan equal to the sum of the execution costs of all tasks and introduce one extra query for each of the tasks with an execution cost equal to the task cost and a cost savings link between task and plan whose value equals the task execution cost again.} and the problem remains NP-hard.

%We will later transform MQO problems into QUBO problems. We introduce the QUBO formalism next.

A QUBO problem is defined over a set $\{X_i\}$ of binary variables (with value domain $\{0,1\}$). A solution to a QUBO problem assigns each of the variables to one of the two possible values. The goal is to minimize the following function that depends on the binary variables: $\sum_{i\leq j}w_{ij}X_iX_j$. The weights $w_{ij}$ are problem instance specific. Note that the formula contains linear terms (for $i=j$ since $x_i^2=x_i$ for binary variables) as well as quadratic terms (for $i\neq j$). A solution to a QUBO problem is optimal if it minimizes the function from above among all possible solutions.

%In the next section, we show how MQO problem instances can be transformed into QUBO problem instances, as the latter format is required as input for the quantum annealer.

\section{Logical Mapping}
\label{mappingSec}

We show now how to transform an MQO problem instance into a QUBO problem instance. This step is required since the quantum annealer can only solve QUBO problems.

As discussed in Section~\ref{modelSec}, an MQO problem is defined by a set $Q$ of queries, a set $P_q$ of plans for each query $q\in Q$ with $P=\cup_q P_q$, execution cost values $c_p$ for each plan $p\in P$, and possible cost savings $s_{p1,p2}$ for each plan pair $p1,p2$. A solution is a subset of plans that are selected for execution such that one plan is selected per query. 

Only binary variables may appear in a QUBO problem. We must therefore represent the solution space of the MQO problem using binary variables. Given a set $P$ of plans, we introduce a binary variable $X_p$ for each plan $p\in P$. If $X_p=1$ then plan $p$ is selected for execution while $p$ is not executed if $X_p=0$.

An MQO solution is only valid if exactly one plan is selected for each query. If our goal was to transform an MQO problem into an integer linear program, we could introduce constraints of the form $\sum_{p\in P_q}X_p=1$ for each $q\in Q$ to guarantee that all returned solutions are valid. Unfortunately, the QUBO formalism does not allow to express constraints directly. As the optimal solution to a QUBO problem minimizes a quadratic formula, we can however add terms to that formula that take high values if constraints are violated. This approach guarantees a valid solution if those terms are scaled up by sufficiently high weights.

We call the quadratic formula defining the QUBO problem short the \textit{energy formula} in the following as it is translated into energy levels by the D-Wave annealer. We decompose the constraint that exactly one plan is selected per query into two parts: we require that at least one plan is selected and that at most one plan is selected. In order to assure that at least one plan is selected for each query, we can simply add the term $E_L=-\sum_{p\in P}X_p$ to the energy formula. As lower values of the energy formula are preferable, this term motivates to set all variables $X_p$ to one. We can express the constraint that at most one plan is selected by adding the term $E_M=\sum_{q\in Q}\sum_{\{p1,p2\}\subseteq P_q}X_{p1}X_{p2}$ to the energy formula. This term takes value zero if at most one plan is selected per query and at least value one otherwise. As we will discuss in the following paragraphs, both terms will have to be scaled by an appropriate factor to make sure that all constraints are respected.

The terms that we have so far inserted into the energy formula make sure that a valid solution is preferable compared to an invalid solution. The goal of the MQO problem is however to minimize execution cost. We must introduce additional energy terms to make a valid solution with lower execution cost preferable over a valid solution with higher execution cost. 

We take into account plan execution cost by introducing the term $E_C=\sum_{p\in P}c_pX_p$ into the energy formula. This means that the execution cost of each selected plan $p$ with $X_p=1$ is added. On the other hand, we must introduce the term $E_S=-\sum_{\{p1,p2\}\subseteq P}s_{p1,p2}X_{p1}X_{p2}$ to represent the possibility of sharing intermediate results between plans. We finally scale up the first two terms that we introduced by a factor whose value we discuss in the following. The resulting energy formula reads\[
w_LE_L+w_ME_M+E_C+E_S.\]

We discuss in the following how to choose the weights $w_L$ and $w_M$. It is crucial to choose the weights as low as possible since having high weights seems to increase the chances of obtaining sub-optimal solutions from the quantum annealer~\cite{King2015a}. We will derive inequalities of the form $w>a$ in the following where $w$ is a weight and $a$ a value that lower-bounds the admissible weights. Having such an inequality, we prefer for the aforementioned reason to choose $w=a+\varepsilon$ in general where $\varepsilon$ is a small value (we typically use $\varepsilon=0.25$ in our implementation).

The energy formula contains two terms that motivate valid solutions ($E_L$ and $E_M$) and two terms that motivate solutions with lower execution cost ($E_C$ and $E_S$). The terms motivating a valid solution should intuitively obtain higher weights than the ones motivating low-cost solutions. If the terms enforcing valid solutions are not associated with sufficiently high weights then the optimal QUBO solution might not select any plans to save execution cost. 

We must make sure that the motivation of selecting at least one plan is always higher than the motivation to save execution cost by not selecting any plan. We accomplish this by requiring $w_L>\max_{p\in P}c_p$. Having scaled up $E_L$ by that factor, the partial energy formula $w_LE_L+E_C+E_S$ would be minimized by executing each plan for each query. This does clearly not reflect the original MQO problem and we must add $w_ME_M$ to restrict the number of plan selections per query to one.

Clearly we must choose $w_M>w_L$ to accomplish the aforementioned goal. This is however insufficient. The generation cost of a query can be lower than the cost reduction achievable by sharing it among other plans. Hence, even if we have $w_M>w_L$, the energy formula might still be minimized by executing multiple plans for the same query. This is due to a shortcoming of the QUBO representation: the QUBO representation leads to believe that it is possible to accumulate cost savings by generating the same result according to multiple plans. In reality, this is of course not the case. We circumvent that problem by explicitly enforcing that at most one plan is selected per query. This is guaranteed if $c_M>c_L+\max_{p1\in P}\sum_{p2\in P}s_{p1,p2}$.

\begin{example}
We show how to transform a simple MQO problem into the QUBO representation. Assume that four plans $p_1$, $p_2$, $p_3$, and $p_4$ are considered with execution cost 2, 4, 3, 1 respectively. The first two plans generate query $q_1$ and the next two plans generate query $q_2$. Assume further that $p_2$ and $p_3$ can share an intermediate result allowing cost savings of 5 cost units. The QUBO representation uses the binary variables $X_1$, $X_2$, $X_3$, and $X_4$ that are associated with plans $p_1$ to $p_4$ and are set to one if the corresponding plan is executed. Then execution cost is represented by the term $E_C=2X_1+4X_2+3X_3+1X_4$. Potential cost savings are represented by the term $E_S=-5X_2X_3$. The term $E_L=-\sum_{i=1}^4X_i$ enforces at least one plan selection for each of the two queries and is weighted by factor $w_L=4+\varepsilon$. Term $E_M=X_1X_2+X_3X_4$ enforces at most one plan selection if weighted by factor $w_M=w_L+5$. The variable assignment $X_1=0$, $X_2=1$, $X_3=1$, $X_4=0$ minimizes the energy formula and represents the optimal solution to the MQO problem at the same time.
\end{example}

We prove formally in Section~\ref{analysisSec} that the mapping method presented in this section is correct.

\begin{figure*}[t!]
\centering
% Small TRIAD
\subfigure[TRIAD with 5 chains\label{5varTRIADSub}]{
\hspace{0.5cm}
\begin{tikzpicture}
\positionUnitCell{0}{0}
\labelUnitCell{1}{3}{4}{5}{2}{3}{4}{5}
\drawUnitCell{1}{1}{3}{4}{5}{2}{3}{4}{5}
\end{tikzpicture}
\hspace{0.5cm}
}
\hspace{\horizontalSubfigureSeparator}
% Medium-sized TRIAD
\subfigure[TRIAD with 8 chains\label{8varTRIADSub}]{
\begin{tikzpicture}
% upper left
\positionUnitCell{0}{1}
\labelUnitCell{1}{2}{3}{4}{1}{2}{3}{4}
\drawUnitCell{ul}{1}{2}{3}{4}{1}{2}{3}{4}
% lower left
\positionUnitCell{0}{0}
\labelUnitCell{1}{2}{3}{4}{5}{6}{7}{8}
\drawUnitCell{ll}{1}{2}{3}{4}{5}{6}{7}{8}
% right
\positionUnitCell{1}{0}
\labelUnitCell{5}{6}{7}{8}{5}{6}{7}{8}
\drawUnitCell{r}{5}{6}{7}{8}{5}{6}{7}{8}
% horizontal connections
\drawHorizontalConnections{ll}{r}
% vertical connections
\drawVerticalConnections{ul}{ll}
\end{tikzpicture}
}
\hspace{\horizontalSubfigureSeparator}
% Big TRIAD
\subfigure[TRIAD with 12 chains\label{12varTRIADSub}]{
\begin{tikzpicture}
% left column
\positionUnitCell{0}{2}
\labelUnitCell{1}{2}{3}{4}{1}{2}{3}{4}
\drawUnitCell{l2}{1}{2}{3}{4}{1}{2}{3}{4}
\positionUnitCell{0}{1}
\labelUnitCell{1}{2}{3}{4}{5}{6}{7}{8}
\drawUnitCell{l1}{1}{2}{3}{4}{5}{6}{7}{8}
\positionUnitCell{0}{0}
\labelUnitCell{1}{2}{3}{4}{9}{10}{11}{12}
\drawUnitCell{l0}{1}{2}{3}{4}{9}{10}{11}{12}
% middle column
\positionUnitCell{1}{1}
\labelUnitCell{5}{6}{7}{8}{5}{6}{7}{8}
\drawUnitCell{m1}{5}{6}{7}{8}{5}{6}{7}{8}
\positionUnitCell{1}{0}
\labelUnitCell{5}{6}{7}{8}{9}{10}{11}{12}
\drawUnitCell{m0}{5}{6}{7}{8}{9}{10}{11}{12}
% right column
\positionUnitCell{2}{0}
\labelUnitCell{9}{10}{11}{12}{9}{10}{11}{12}
\drawUnitCell{r}{9}{10}{11}{12}{9}{10}{11}{12}
% horizontal connections
\drawHorizontalConnections{l0}{m0}
\drawHorizontalConnections{m0}{r}
\drawHorizontalConnections{l1}{m1}
% vertical connections
\drawVerticalConnections{l2}{l1}
\drawVerticalConnections{l1}{l0}
\drawVerticalConnections{m1}{m0}
\end{tikzpicture}
}
\hspace{\horizontalSubfigureSeparator}
% TRIAD block notation
\subfigure[TRIAD with 12 chains and two broken qubits\label{brokenQubitsSub}]{
\begin{tikzpicture}
% left column
\positionUnitCell{0}{2}
\labelUnitCell{1}{2}{}{4}{1}{2}{}{4}
\drawUnitCell{l2}{1}{2}{Unused}{4}{1}{2}{Unused}{4}
\positionUnitCell{0}{1}
\labelUnitCell{1}{2}{}{4}{5}{6}{7}{8}
\drawUnitCell{l1}{1}{2}{Broken}{4}{5}{6}{7}{8}
\positionUnitCell{0}{0}
\labelUnitCell{1}{2}{}{4}{}{10}{11}{12}
\drawUnitCell{l0}{1}{2}{Unused}{4}{Broken}{10}{11}{12}
% middle column
\positionUnitCell{1}{1}
\labelUnitCell{5}{6}{7}{8}{5}{6}{7}{8}
\drawUnitCell{m1}{5}{6}{7}{8}{5}{6}{7}{8}
\positionUnitCell{1}{0}
\labelUnitCell{5}{6}{7}{8}{}{10}{11}{12}
\drawUnitCell{m0}{5}{6}{7}{8}{Unused}{10}{11}{12}
% right column
\positionUnitCell{2}{0}
\labelUnitCell{}{10}{11}{12}{}{10}{11}{12}
\drawUnitCell{r}{Unused}{10}{11}{12}{Unused}{10}{11}{12}
% horizontal connections
\drawHorizontalConnections{l0}{m0}
\drawHorizontalConnections{m0}{r}
\drawHorizontalConnections{l1}{m1}
% vertical connections
\drawVerticalConnections{l2}{l1}
\drawVerticalConnections{l1}{l0}
\drawVerticalConnections{m1}{m0}
\end{tikzpicture}
}
\caption{TRIAD pattern in different sizes: we show qubits as circles, annotated by the ID of the logical variable that they represent. The mapping from variables to qubits assures that each variable shares at least one connection (in black) with each of the other variables.\label{TRIADFig}}
\end{figure*}
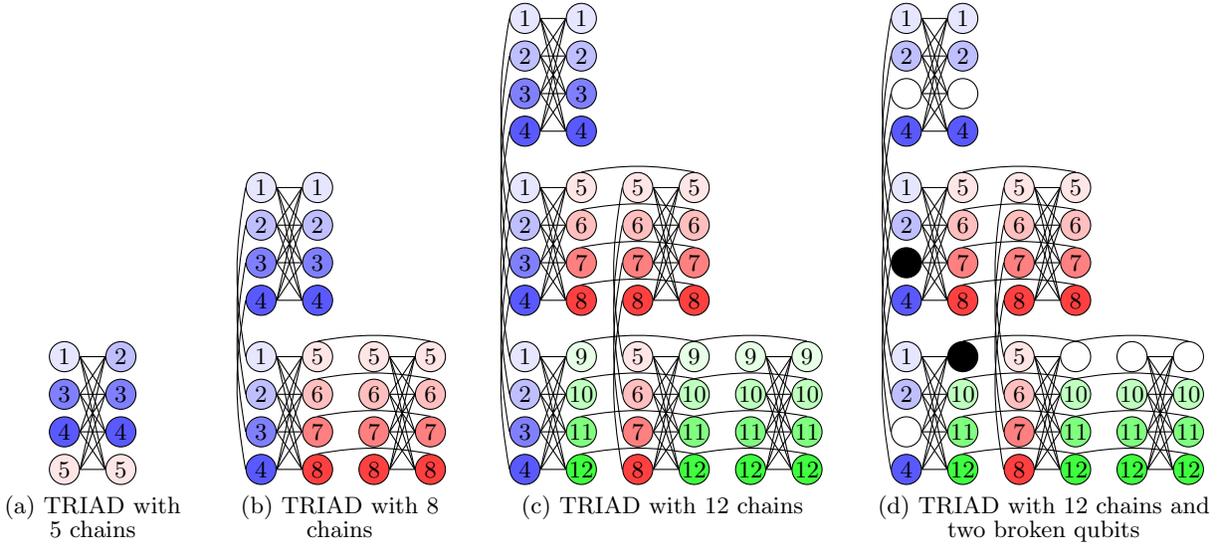

\section{Physical Mapping}
\label{embeddingSec}

We have seen in the previous section how to transform an MQO problem into an energy formula defined on the variables $X_p$. We require one more transformation until we can apply the quantum annealer: we must choose for each logical variable $X_p$ a group of physical qubits to represent it. Then we must set the weights on single qubits and the strengths of the couplings between qubits in order to translate the \textit{logical energy formula} of the form $\sum_{\{p1,p2\}\subseteq P}w_{p1,p2}X_{p1}X_{p2}$ into a \textit{physical energy formula} of the form $\sum_{i\leq j}\widetilde{w}_{ij}b_{i}b_{j}$ where $b_i$ represents the state of the $i$-th qubit of the quantum annealer. We call this transformation the \textit{physical mapping} or \textit{embedding}.

This second transformation is required since it is in general insufficient to represent one QUBO variable by one qubit. This is due to the sparse connection structure between qubits (see Section~\ref{backgroundSec} for a detailed description of the connection structure). Each qubit is connected to at most six other qubits. In the physical energy formula, only the weights between connected qubits can be different from zero. Hence for a fixed $i$ there are at most six values for $j$ such that $\widetilde{w}_{ij}\neq0$. If a QUBO variable interacts with more than six other QUBO variables (meaning that for a fixed plan $p1$ there are more than six plans $p2$ such that $w_{p1,p2}\neq0$ in the logical energy formula) then we must represent that variable by multiple qubits. 

The physical mapping consists of three steps. First, for each QUBO variable we must select a group of physical qubits to represent it. Second, if the energy formula contains a term of the form $w_iX_i$ where $w_i$ is a weight and $X_i$ a QUBO variable then we must distribute that weight over all qubits representing $X_i$: if $B$ denotes the set of qubits representing $X_i$ then the weight $w_i/|B|$ is added on each qubit in $B$. If a term $w_{ij}X_iX_j$ appears in the energy formula then we select one qubit $b_1$ among the qubits representing $X_i$ and another qubit $b_2$ among the qubits representing $X_j$ such that $b_1$ and $b_2$ are connected by a coupling in the qubit matrix and we increase the strength of that coupling by $w_{ij}$. In a third step, we must make sure that all qubits representing the same variable ``behave as one bit'' and are assigned consistently to the same value after an annealing run. We accomplish this by adding additional weights on the qubits and on the couplings between qubits representing the same variable such that the minimum energy is reached for a consistent assignment. This requires that all qubits representing the same variable form a chain of connected qubits. 

We provide further details on step one and step three in the following, starting with step one.

Mapping variables to qubits is a highly non-trivial problem as the mapping must satisfy various constraints. First, we must represent variables by groups of qubits that are connected in a chain. Second, if two logical variables appear together in a quadratic term in the energy formula then the two groups of qubits representing those variables must be connected, i.e.\ at least one qubit from the first group is connected to at least one qubit of the second group. Third, we must take into account that some of the qubits and inter-qubit connections on the D-Wave annealer are broken and cannot be used (see Figure~\ref{matrixFig}). 

Finding for a given QUBO problem the embedding that satisfies all of the aforementioned constraints while consuming the minimal number of qubits is an NP-hard problem~\cite{Klymko2014}. We cannot solve it optimally without the risk that the time for finding an optimal embedding dominates the time of finding the optimal solution to the resulting QUBO problem. For that reason, we currently use simple embedding schemes that can be generated with negligible time overhead and are presented in the following. 

Figure~\ref{TRIADFig} shows the TRIAD pattern proposed by Choi~\cite{Choi2011} in the graphical representation introduced by Venturelli et al.~\cite{Venturelli2014}. This pattern allows to embed arbitrary QUBO problems. Figures~\ref{5varTRIADSub} to \ref{12varTRIADSub} show the pattern in different sizes, supporting 5, 8, and 12 logical variables. When representing each logical variable by a chain of qubits in this pattern (qubits in the same chain are labeled by the same number in the figure) then arbitrary energy formulas can be modeled since the pattern connects each pair of variables. 

The method currently used for manufacturing the qubit matrix is imperfect and results in a certain percentage of broken qubits. If a qubit chain contains broken qubits then the entire chain becomes unusable since it cannot be guaranteed anymore that all qubits in the chain are assigned to the same value. Figure~\ref{brokenQubitsSub} illustrates the problem, visualizing broken qubits in black and intact qubits in unusable chains in white. 

All chains in the TRIAD pattern are connected by at least one coupling. The downside of enabling so many connections is that the number of qubits consumed by the TRIAD pattern grows quadratically in the number of chains and qubits must be considered a scarce resource on current quantum annealers. Analyzing the energy formula from the last section, we find that we require connections between logical variables representing different plans for the same query (due do the quadratic sub-terms contained in $E_M$) and connections between variables representing plans for different queries with work overlap (due to the terms in $E_S$). 

Existing approaches for MQO cluster queries based on structural properties in a preprocessing step~\cite{Le2012} such that queries in different clusters are less likely to share intermediate results. We can exploit such a clustering in certain cases as illustrated in Figure~\ref{alternativePatternFig}: instead of a single TRIAD pattern, we use multiple TRIAD patterns where each TRIAD represents all variables associated with the plans for the query in one single cluster. As different plans for the same query are integrated into the same TRIAD structure, we are sure to realize all connections required by term $E_M$. As plans for different queries in the same cluster are integrated into the same TRIAD as well, all connections required by term $E_S$ can be realized, too. The connections between qubits representing plans in different clusters are sparse but so are the opportunities of work sharing between them and connections between plans in different clusters can only represent work sharing opportunities. The advantage of using the clustered pattern over the single TRIAD pattern is that the number of required qubits grows more slowly in the number of queries and plans as we analyze in more detail in the following section.

\setlength{\unitCellXdistance}{1.375cm}

\begin{figure}[t!]
\begin{tikzpicture}
% cells result 1
\positionUnitCell{0}{1}
\labelUnitCell{1}{2}{3}{4}{1}{2}{3}{4}
\drawUnitCell{c0u}{2}{2}{2}{2}{2}{2}{2}{2}
\positionUnitCell{0}{0}
\labelUnitCell{1}{2}{3}{4}{5}{6}{7}{8}
\drawUnitCell{c0l}{2}{2}{2}{2}{2}{2}{2}{2}
\positionUnitCell{1}{0}
\labelUnitCell{1}{2}{3}{4}{5}{6}{7}{8}
\drawUnitCell{c1l}{2}{2}{2}{2}{2}{2}{2}{2}
% cells result 2
\positionUnitCell{1}{1}
\labelUnitCell{1}{2}{3}{4}{1}{2}{3}{4}
\drawUnitCell{c1u}{6}{6}{6}{6}{6}{6}{6}{6}
\positionUnitCell{2}{1}
\labelUnitCell{5}{6}{7}{8}{1}{2}{3}{4}
\drawUnitCell{c2u}{6}{6}{6}{6}{6}{6}{6}{6}
\positionUnitCell{2}{0}
\labelUnitCell{5}{6}{7}{8}{5}{6}{7}{8}
\drawUnitCell{c2l}{6}{6}{6}{6}{6}{6}{6}{6}
% cells result 3
\positionUnitCell{3}{1}
\labelUnitCell{1}{2}{3}{4}{1}{2}{3}{4}
\drawUnitCell{c3u}{10}{10}{10}{10}{10}{10}{10}{10}
\positionUnitCell{3}{0}
\labelUnitCell{1}{2}{3}{4}{5}{6}{7}{8}
\drawUnitCell{c3l}{10}{10}{10}{10}{10}{10}{10}{10}
\positionUnitCell{4}{0}
\labelUnitCell{1}{2}{3}{4}{5}{6}{7}{8}
\drawUnitCell{c4l}{10}{10}{10}{10}{10}{10}{10}{10}
% cell result 4
\positionUnitCell{4}{1}
\labelUnitCell{1}{2}{3}{4}{1}{2}{3}{4}
\drawUnitCell{c4u}{14}{14}{14}{14}{14}{14}{14}{14}
\positionUnitCell{5}{1}
\labelUnitCell{5}{6}{7}{8}{1}{2}{3}{4}
\drawUnitCell{c5u}{14}{14}{14}{14}{14}{14}{14}{14}
\positionUnitCell{5}{0}
\labelUnitCell{5}{6}{7}{8}{5}{6}{7}{8}
\drawUnitCell{c5l}{14}{14}{14}{14}{14}{14}{14}{14}
% horizontal connections between different results
\drawHorizontalConnections{c0u}{c1u}
\drawHorizontalConnections{c1l}{c2l}
\drawHorizontalConnections{c2u}{c3u}
\drawHorizontalConnections{c2l}{c3l}
\drawHorizontalConnections{c3u}{c4u}
\drawHorizontalConnections{c4l}{c5l}
% vertical connections between different results
\drawVerticalConnections{c1u}{c1l}
\drawVerticalConnections{c4u}{c4l}
% horizontal connections between same results
\drawHorizontalConnections{c0l}{c1l}
\drawHorizontalConnections{c1u}{c2u}
\drawHorizontalConnections{c3l}{c4l}
\drawHorizontalConnections{c4u}{c5u}
% vertical connections between same results
\drawVerticalConnections{c0u}{c0l}
\drawVerticalConnections{c2u}{c2l}
\drawVerticalConnections{c3u}{c3l}
\drawVerticalConnections{c5u}{c5l}
\end{tikzpicture}
\caption{Clustered embedding pattern: qubits representing plans in different clusters are distinguished by their color (four colors hence four clusters), the qubit label is the plan identifier (numbers one to eight represent eight alternative plans per cluster). \label{alternativePatternFig}}
\end{figure}
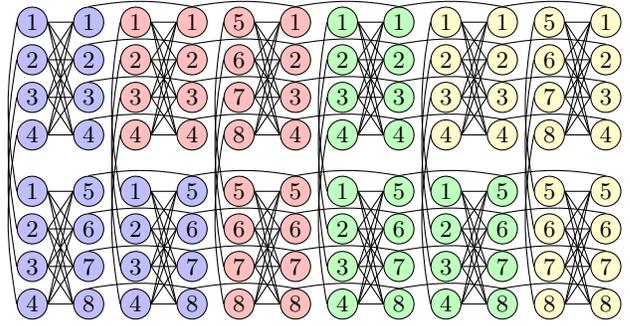

%Connections between plans in different clusters represent work sharing opportunities and are dotted in red.

%Figure~\ref{alternativePatternFig} shows a pattern that implements a sparser connection structure between plans for different results than the TRIAD pattern. We use the TRIAD pattern in fact as a sub-pattern for all variables representing alternative plans for the same result. On the other side, we assume that the potential for sharing work between different results is limited and that each result has sharing opportunities at most with a constant number of neighbors. This pattern requires a smaller amount of qubits to represent the same number of variables as we analyze in more detail in the following section. It imposes however restrictions on the connection structure that the generic pattern does not impose. We plan to explore additional mapping patterns, realizing more diverse tradeoffs between plan connectivity and required number of qubits, in future work.

%We provide more details on step one and step three in the following. We start by justifying the need to represent single QUBO variables by multiple qubits. Then we describe how we enforce that different qubits representing the same variable are consistently assigned to the same value (step three), assuming that the first two steps have been executed before. After that, we finally describe how to select qubits to represent specific variables.

The annealing process that is executed by the quantum annealer takes only into account the physical energy formula. We cannot directly integrate the information that multiple qubits represent the same logical variable and should be assigned to the same value. Instead, we must add more terms to the physical energy formula that make groups of qubits ``behave as one bit''. More precisely, we add, for each group of qubits, terms to the energy formula that take high values if the qubits are assigned to different values. As the goal is to minimize the energy formula, such terms will drive the annealing process towards solutions that assign groups of qubits representing the same variable to the same value. Then we can read out one single value for the entire qubit group and associate it with the represented variable.

Assume that two connected qubits $b_1$ and $b_2$ represent the same variable. We motivate assigning the same value to both of them by adding the term $b_1+b_2-2b_1b_2$ to the energy formula. This term takes value one if the qubits are assigned to different values and takes value zero if both qubits are assigned to the same value. 

Assume now that we have a group of qubits with more than two elements that need to be assigned to the same value. We generally require qubit groups representing the same variable to form a chain. This means that we can order the qubits into a sequence $\langle b_1,b_2,\ldots,b_m\rangle$ such that each qubit $b_i$ is connected to its successor $b_{i+1}$ in the qubit matrix. Under this assumption, we can add energy terms of the form $E_B(i)=b_i+b_{i+1}-2b_ib_{i+1}$ that motivate assigning the same value to two consecutive qubits. Adding the terms $E_B=\sum_{i=1}^{m-1}E_B(i)$ to the energy formula motivates assigning all qubits to the same value.

We assume in the following that the terms from the logical energy formula have already been integrated into the physical energy formula as described under step two at the beginning of this section. Hence the physical energy formula contains terms in addition to the terms $E_B$. This means that we have to scale up the terms $E_B$ by a factor that is sufficiently high to assure that the energy formula becomes minimal for a value assignment where all equality constraints, represented by $E_B$, are satisfied. As discussed in Section~\ref{mappingSec}, we choose the scaling factors as low as possible to avoid a large range of possible energy values.

The following scaling method is based on ideas by Choi~\cite{Choi2008}. We treat each group $B$ of qubits representing the same variable separately and calculate a specific scaling factor for $E_B$. This scaling factor must make sure that a solution with inconsistent assignments for a qubit group improves (i.e., the value of the energy formula decreases) once replacing inconsistent assignments by a consistent one (either all qubits in the group are set to one or all are set to zero). Consider a group $B$ of qubits representing the same variable that are assigned to inconsistent values. The term $w_BE_B$ adds at least $w_B$ to the energy formula as the chain must be broken at least at one position. Replacing the inconsistent assignment by a consistent assignment lets $w_BE_B$ take the value zero and reduces the total energy level by $w_B$.

Making the assignment for $B$ consistent must reduce the energy value of $w_BE_B$ but it might increase the value of other terms in the energy formula. We calculate in the following the upper bound $U$ on the increase in the other energy terms. We denote by $U_{0\rightarrow1}(b)$ the maximal increase in energy caused by changing the value of $b$ from zero to one. Denote by $v$ the weight on $b$ and by $v_i$ all weights on couplings that connect $b$ to qubits outside of $B$. Then we have $U_{0\rightarrow 1}(b)=v+\sum_i\max(v_i,0)$. We pessimistically assume here that each qubit connected to $b$ via a positive weight is set to one while qubits connected via a negative weight are set to zero. This yields an upper bound on the increase in energy. We can calculate an upper bound for the energy increase when setting the value of $b$ from one to zero in the analogue fashion and denote the result by $U_{1\rightarrow0}(b)$.

We have the choice between setting all qubits in $B$ to one or setting all of them to zero in order to make the assignment for $B$ consistent. We can select the option that leads to a lower increase in energy and therefore obtain $U=\min(\sum_{b\in B}U_{1\rightarrow0}(b),\sum_{b\in B}U_{0\rightarrow1}(b))$ as upper bound for the increase in all energy terms except for $E_B$ when making an inconsistent assignment consistent. This means that the energy formula must become minimal for a consistent assignment if we set $w_B=U+\varepsilon$.

\section{Formal Analysis}
\label{analysisSec}

In this section, we prove that the transformation from MQO to QUBO problems that we introduced in Section~\ref{mappingSec} is correct, meaning that the optimal solution to the QUBO problem represents indeed the optimal solution to the MQO problem. Later, we will analyze how the number of qubits that our mapping requires evolves as a function of the MQO problem dimensions. 

We prove that the energy formula $w_ME_M+w_LE_L+E_C+E_S$ becomes minimal for an assignment of variables to values representing an optimal solution to the MQO problem from which the energy formula was derived.

\begin{lemma}
The energy formula is minimized by selecting at most one plan per query.\label{atMostOneLemma}
\end{lemma}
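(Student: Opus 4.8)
The plan is to argue by contradiction. Suppose some energy-minimizing assignment $X$ selects two or more plans for some query; I will construct from it a strictly cheaper assignment, which contradicts minimality and therefore forces every minimizer to select at most one plan per query.

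Concretely, I would fix a query $q\in Q$ with $|\{p\in P_q : X_p=1\}|\ge 2$, pick one such selected plan $p^\star\in P_q$, and define $X'$ to agree with $X$ everywhere except $X'_{p^\star}=0$. The heart of the proof is then a term-by-term bound on the energy difference $\Delta=(\text{energy of }X')-(\text{energy of }X)$. Deselecting $p^\star$ changes $E_L=-\sum_p X_p$ by $+1$ (contributing $+w_L$); changes $E_C$ by $-c_{p^\star}$; changes $E_S$ by at most $\sum_{p\in P}s_{p^\star,p}\le\max_{p1\in P}\sum_{p2\in P}s_{p1,p2}$, since it merely drops the (positively weighted) sharing terms linking $p^\star$ to the other currently selected plans; and changes $E_M$ by at least $-1$, because $p^\star$ forms a within-query quadratic term in $E_M$ with at least one other selected plan of $q$ -- this is exactly where the hypothesis $|\{p\in P_q:X_p=1\}|\ge 2$ is used -- contributing at most $-w_M$. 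Summing, $\Delta \le w_L - c_{p^\star} + \max_{p1\in P}\sum_{p2\in P}s_{p1,p2} - w_M$, and the weight condition $w_M > w_L + \max_{p1\in P}\sum_{p2\in P}s_{p1,p2}$ from Section~\ref{mappingSec} gives $\Delta < -c_{p^\star}\le 0$ (using $c_p\ge 0$). Hence $X'$ has strictly smaller energy than $X$, a contradiction.

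The step I expect to require the most care is the bookkeeping of the $E_M$ and $E_S$ deltas: I must verify that flipping $X_{p^\star}$ from $1$ to $0$ only ever removes terms (never adds any), that the $-1$ bound on $\Delta E_M$ genuinely relies on there being a second selected plan in $P_q$, and that the bound on $\Delta E_S$ is taken over all of $P$ rather than only $P_q$, so that it matches the constant $\max_{p1\in P}\sum_{p2\in P}s_{p1,p2}$ appearing in the weight condition. It is also worth stating explicitly that $X'$ may still violate the at-most-one property for other queries; this is harmless, since exhibiting a single strictly better assignment already refutes minimality of $X$ (alternatively, one iterates the deselection step until no query carries two selected plans).
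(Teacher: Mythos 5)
Your proposal is correct and follows essentially the same argument as the paper: assume a minimizer selects two plans for one query, deselect one of them, bound the energy change term by term ($+w_L$ from $E_L$, at most $+\sum_{p\in P}s_{p^\star,p}$ from $E_S$, $-c_{p^\star}$ from $E_C$, at least $-w_M$ from $E_M$), and invoke the weight condition $w_M>w_L+\max_{p1\in P}\sum_{p2\in P}s_{p1,p2}$ to get a strict decrease, contradicting minimality. Your extra remarks (that the bound on $\Delta E_S$ must range over all of $P$, and that $X'$ need not itself be feasible) are sound clarifications of the same proof rather than a different route.
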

\begin{proof}
Assume that the energy formula was minimized by setting $X_{p1}=X_{p2}=1$ where $p1$ and $p2$ are alternative plans for the same query. If we set $X_{p1}=0$ (or $X_{p2}$) then the value of term $E_C$ decreases by the execution cost of $p1$ while $E_S$ might increase as cost savings enabled by executing $p1$ cannot be realized. Term $E_S$ increases at most by the accumulated cost savings enabled by $p1$ which is $\sum_{p\in P}s_{p1,p}$. The value of term $w_LE_L$ increases by $w_L$. Term $E_M$ contains the sub-term $w_MX_{p1}X_{p2}$ so the value of $E_M$ decreases by $w_M$. In summary, the energy increases at most by $w_L+\sum_{p\in P}s_{p1,p}$ while it decreases by $w_M$ and $w_M>w_L+\sum_{p\in P}s_{p1,p}$. The energy decreases by setting $X_{p1}=0$ which contradicts our initial assumption.
\end{proof}

\begin{lemma}
The energy formula is minimized by selecting at least one plan per query.\label{atLeastOneLemma}
\end{lemma}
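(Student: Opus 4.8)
The plan is to argue by contradiction, in the same spirit as the proof of Lemma~\ref{atMostOneLemma}. Suppose the energy formula attains its minimum at an assignment in which some query $q\in Q$ has \emph{no} selected plan, i.e.\ $X_p=0$ for every $p\in P_q$. Since every plan set is non-empty (for an intermediate result it contains at least the ``do not generate'' plan), I can pick an arbitrary plan $p^*\in P_q$ and show that flipping $X_{p^*}$ from $0$ to $1$ strictly decreases the energy, contradicting minimality.

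The argument is pure bookkeeping over the four terms of $w_ME_M+w_LE_L+E_C+E_S$. First, $w_LE_L$ decreases by exactly $w_L$, since $E_L=-\sum_{p\in P}X_p$ picks up one more $-1$. Second --- and this is the step that uses the hypothesis that $q$ is currently empty --- $w_ME_M$ is unchanged: the only sub-terms of $E_M$ involving $X_{p^*}$ are the products $X_{p^*}X_{p_2}$ with $p_2\in P_q\setminus\{p^*\}$, and all such $X_{p_2}$ equal zero, so no term of $E_M$ changes. Third, $E_C$ increases by $c_{p^*}$, and since $w_L>\max_{p\in P}c_p\ge c_{p^*}$ this increase is strictly smaller than the $w_L$ decrease coming from $E_L$. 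Fourth, $E_S=-\sum_{\{p_1,p_2\}\subseteq P}s_{p_1,p_2}X_{p_1}X_{p_2}$ can only decrease (or stay equal), because every $s_{p_1,p_2}>0$ enters with a negative sign, so turning on $X_{p^*}$ merely subtracts additional non-negative quantities. Summing the contributions, the net change in the energy is at most $c_{p^*}-w_L<0$, the desired contradiction.

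The point that needs care --- rather than a real obstacle --- is the $E_M$ step: unlike in Lemma~\ref{atMostOneLemma}, here we do \emph{not} need the stronger inequality $w_M>w_L+\max_{p_1\in P}\sum_{p_2\in P}s_{p_1,p_2}$, precisely because assuming ``$q$ has no selected plan'' guarantees that switching on $p^*$ introduces no newly violated at-most-one constraint. The only weight inequality invoked is $w_L>\max_{p\in P}c_p$. Combining this lemma with Lemma~\ref{atMostOneLemma} then shows that the energy-minimizing assignment selects exactly one plan per query, i.e.\ encodes a valid MQO solution, which is the prerequisite for subsequently comparing the costs of valid solutions.
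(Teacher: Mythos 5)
Your proposal is correct and follows essentially the same route as the paper's own proof: a local-improvement contradiction that flips one plan of the empty query on, with $E_L$ dropping by $w_L$, $E_C$ rising by at most $c_{p^*}<w_L$, $E_S$ non-increasing, and $E_M$ unchanged. Your added justification of why $E_M$ is unaffected (all other plans of $q$ are zero) and the observation that only $w_L>\max_{p\in P}c_p$ is needed are correct refinements of the same argument.
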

\begin{proof}
Assume that the energy formula was minimized by setting $X_p=0$ for all $p\in P_q$ for a query $q\in Q$. Pick one arbitrary plan $p\in P_q$ and set $X_p=1$ instead. Then the value of term $E_C$ increases by the execution cost $c_p$ of that plan. The value of term $E_S$ can only decrease since executing $p$ might enable possibilities to share work and reduce execution cost. The value of $E_M$ remains constant while the value of $w_LE_L$ decreases by $w_L$. In summary, the energy increases at most by $c_p$ and decreases by $w_L$ and $w_L>c_p$. The energy decreases by setting $X_{p}=1$ which contradicts our initial assumption.
\end{proof}

\begin{theorem}
The energy formula is minimized for a valid solution with minimal execution cost.
\end{theorem}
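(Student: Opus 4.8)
The plan is to combine the two preceding lemmas with a direct evaluation of the energy formula on valid solutions. First I would observe that the set of variable assignments is finite, so the energy formula $w_ME_M+w_LE_L+E_C+E_S$ attains a global minimum. By Lemma~\ref{atMostOneLemma} every global minimizer selects at most one plan per query, and by Lemma~\ref{atLeastOneLemma} it selects at least one; hence every global minimizer corresponds to a \emph{valid} solution $P_e$ satisfying $\forall q\in Q:|P_q\cap P_e|=1$. Here I would use the lemmas in the sharper reading that \emph{every} minimizer has the stated property, which is exactly what their proofs give, since they exhibit a strict energy decrease whenever the property fails.

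Next I would restrict attention to valid solutions and evaluate the formula term by term. For a valid $P_e$, exactly $|Q|$ of the variables $X_p$ equal one, so $E_L=-|Q|$; no two plans of the same query are both selected, so $E_M=0$; the cost term is $E_C=\sum_{p\in P_e}c_p$; and the sharing term is $E_S=-\sum_{\{p1,p2\}\subseteq P_e}s_{p1,p2}$. Thus on valid solutions the energy formula equals $-w_L|Q|+\bigl(\sum_{p\in P_e}c_p-\sum_{\{p1,p2\}\subseteq P_e}s_{p1,p2}\bigr)=-w_L|Q|+C(P_e)$, where $-w_L|Q|$ is a constant that does not depend on which valid solution is chosen.

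From this the theorem follows in both directions. Let $P_e^\star$ be a valid solution of minimal execution cost $C^\star$. Any global minimizer is, by the two lemmas, a valid solution $P_e'$, and its energy equals $-w_L|Q|+C(P_e')$ with $C(P_e')\ge C^\star$; on the other hand the energy of $P_e^\star$ is $-w_L|Q|+C^\star\le -w_L|Q|+C(P_e')$. Hence $P_e^\star$ also attains the global minimum, and, conversely, every global minimizer has execution cost exactly $C^\star$, i.e.\ is a valid solution of minimal execution cost.

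The term-by-term evaluation is routine. The only points that need care are the logical bookkeeping just described — exploiting the lemmas to rule out invalid competitors at the global minimum — and checking that the offset $-w_L|Q|$ is genuinely independent of the particular valid solution, so that minimizing the energy formula over valid solutions is \emph{equivalent} to minimizing $C(\cdot)$. I do not expect a real obstacle here; the substance of the argument has already been done in Lemmas~\ref{atMostOneLemma} and~\ref{atLeastOneLemma}, which is where the weight conditions $w_L>\max_{p\in P}c_p$ and $w_M>w_L+\max_{p1\in P}\sum_{p2\in P}s_{p1,p2}$ were used.
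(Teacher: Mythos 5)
Your proof is correct and follows essentially the same route as the paper's: invoke Lemmas~\ref{atMostOneLemma} and~\ref{atLeastOneLemma} to confine global minimizers to valid solutions, note that $E_L$ and $E_M$ are constant ($-|Q|$ and $0$) on valid solutions, and reduce minimization of the energy to minimization of $E_C+E_S=C(P_e)$. Your additional bookkeeping (finiteness of the search space and the explicit offset $-w_L|Q|$) only makes explicit what the paper leaves implicit.
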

\begin{proof}
The energy formula becomes minimal for a valid solution (meaning that one plan is selected per query) according to Lemmata~\ref{atMostOneLemma} and \ref{atLeastOneLemma}. Furthermore, terms $E_L$ and $E_M$ have the same value for each valid solution (value $-|Q|$ for $E_L$ and value $0$ for $E_M$). This means that those two terms do not influence the choice between valid solutions. The selection of an optimal solution is entirely governed by the combined term $E_C+E_S$. The theorem follows since that term represents exactly the execution cost, taking into account cost reductions by shared work. 
\end{proof}

It is common to analyze the time complexity of optimization algorithms on traditional (non-quantum) computers. It would be interesting to analyze the asymptotic run time until the quantum annealer finds optimal or near-optimal solutions as a function of the MQO problem dimensions. A theoretical framework for analyzing worst case time complexity on adiabatic quantum computers is however currently not available~\cite{Venturelli2015}. This is why the analysis of adiabatic quantum approaches usually focuses on the number of required qubits. This metric is relevant since the number of available qubits imposes most restrictions in practice. %If problems can be represented with the available number of qubits then it is usually possible to obtain very good solutions quickly while the restricted number of qubits limits the range of problems that can be treated. 

We analyze the required number of qubits as a function of the following variables. We denote by $n$ the number of query clusters, by $m$ the number of queries per cluster, and by $l$ the number of alternative plans per query. We generally assume that connections between plans in the same cluster are relatively dense while connections between different clusters are relatively sparse. In order to simplify the following analysis, we assume now the extreme case that all plans in each cluster are connected while no connections exist between different clusters. This would allow to decompose the problem and treat different clusters separately but the following results still apply to sparsely connected clusters in which decomposition is not possible. We first analyze the QUBO representation from Section~\ref{mappingSec} in terms of how many qubits it \textit{minimally} requires.

\begin{theorem}
The QUBO representation introduced in Section~\ref{mappingSec} requires $\Omega(n\cdot(m\cdot l)^2)$ qubits.
\end{theorem}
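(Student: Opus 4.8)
The plan is to reduce the claim to a lower bound on the number of qubits any valid physical mapping (in the sense of Section~\ref{embeddingSec}) must spend on a single cluster, and to establish that bound by a degree-counting argument on the Chimera graph. First I would note that, under the stated extreme assumption, the logical energy formula restricted to the $m\cdot l$ plans of a fixed cluster has a nonzero quadratic coefficient for \emph{every} pair of these plans: two alternative plans for the same query contribute a term $w_MX_{p1}X_{p2}$ through $E_M$, and every remaining pair of plans inside the cluster is connected by assumption, hence contributes a nonzero term $-s_{p1,p2}X_{p1}X_{p2}$ through $E_S$. So the interaction graph on each cluster's logical variables is the complete graph $K_{ml}$, and by the physical mapping rules the chains of qubits representing any two of these variables must share at least one coupling of the Chimera graph. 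Since distinct clusters share neither logical variables nor quadratic terms, the qubit sets used for different clusters are pairwise disjoint, so it suffices to show that each cluster alone forces $\Omega((ml)^2)$ qubits and then multiply by $n$.

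Next I would fix a cluster, write $\langle C_1,\dots,C_{ml}\rangle$ for the pairwise vertex-disjoint qubit chains representing its variables, and bound $k_i:=|C_i|$ from below. A chain is a simple path, so it has at least $k_i-1$ internal Chimera couplings, and every qubit of the Chimera graph has degree at most $6$. Hence the number of couplings with exactly one endpoint in $C_i$ is at most $\sum_{b\in C_i}\deg(b)-2(k_i-1)\le 6k_i-2(k_i-1)=4k_i+2$. Because $C_i$ must be joined to each of the other $ml-1$ chains by at least one such coupling, we obtain $4k_i+2\ge ml-1$, i.e. $k_i\ge (ml-3)/4$.

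Finally, summing over the cluster and using that the chains are vertex-disjoint, the cluster consumes at least $\sum_{i=1}^{ml}k_i\ge ml\cdot (ml-3)/4=\Omega((ml)^2)$ qubits; since the $n$ clusters occupy disjoint qubit sets, the whole QUBO representation needs $\Omega(n\cdot(ml)^2)$ qubits, which is the claim. Note the argument only exploits that the Chimera graph has bounded degree (six), not its finer structure, and it holds for every legal physical mapping, matching the word ``minimally'' in the statement.

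The main obstacle is getting the edge-counting step exactly right: one must be careful that each chain is a simple path so it really has $\ge k_i-1$ internal edges, that any additional Chimera chords inside a chain only \emph{decrease} the count of leaving couplings (which strengthens the bound), and that the chains of distinct variables are vertex-disjoint so that the per-chain length bounds may be summed. A secondary point worth stating cleanly is the passage from ``the interaction graph is complete'' to ``each chain is long,'' which is precisely what the $4k_i+2$ bound provides.
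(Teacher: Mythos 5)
Your proposal is correct and follows essentially the same route as the paper's proof: both observe that $E_M$ and $E_S$ make the interaction graph within a cluster complete on its $m\cdot l$ variables, and then use the bounded qubit degree of the Chimera graph to force each chain to contain $\Omega(m\cdot l)$ qubits, which summed over all $n\cdot m\cdot l$ variables gives $\Omega(n\cdot(m\cdot l)^2)$. The only difference is one of rigor, not of method: your explicit edge-counting bound $4k_i+2\ge ml-1$ makes precise the paper's briefer appeal to the ``constant number of connections per qubit.''
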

\begin{proof}
The total number of considered plans is $n\cdot m\cdot l$. This is at the same time the number of logical variables and hence a lower bound on the number of qubits (as each variable must be represented at least by one qubit). We must however take into account that each qubit is connected to at most six other qubits. The number of required connections between logical variables leads therefore to another lower bound on the required number of qubits. 

Only quadratic terms in the energy formula require connected qubits. Terms $E_L$ and $E_C$ contain no quadratic sub-terms while $E_M$ connects all plans for the same query and term $E_S$ connects plans with work overlap. As mentioned before, we simplify by assuming that $E_S$ connects all plans in the same cluster but no plans in different clusters. Hence plans are connected to all plans in the same cluster. The number of plans per cluster is $m\cdot l$ so each plan is connected to $\Omega(m\cdot l)$ other plans. Due to the constant number of connections per qubit, this means that each plan must be represented by $\Omega(m\cdot l)$ qubits. Multiplying by the total number of plans, $n\cdot m\cdot l$, yields the postulated result.
\end{proof}

The minimal number of qubits is a property of the logical mapping presented in Section~\ref{mappingSec}. Now we analyze the actual (asymptotic) number of qubits required by the clustered mapping pattern presented in Section~\ref{embeddingSec}. We assume that all qubits used by the pattern are intact.

\begin{theorem}
The physical mapping pattern introduced in Section~\ref{embeddingSec} requires $\Theta(n\cdot (m\cdot l)^2)$ qubits.
\end{theorem}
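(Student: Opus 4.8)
The plan is to establish the $\Theta$ bound as a matching pair: the lower bound $\Omega(n\cdot(m\cdot l)^2)$ and the upper bound $O(n\cdot(m\cdot l)^2)$. The lower bound is immediate from the preceding theorem. The clustered pattern is one concrete embedding of the logical energy formula of Section~\ref{mappingSec}, so it uses at least as many qubits as the minimal embedding of that formula, and that minimum was just shown to be $\Omega(n\cdot(m\cdot l)^2)$. Hence only the upper bound remains to be argued.

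For the upper bound I would reason structurally about the clustered pattern of Figure~\ref{alternativePatternFig}. It places one TRIAD block per query cluster, and a cluster holds $m$ queries with $l$ alternative plans each, i.e.\ $k=m\cdot l$ logical variables. The first key step is to bound the size of a single TRIAD on $k$ variables: by Choi's construction~\cite{Choi2011} (cf.\ Figures~\ref{5varTRIADSub}--\ref{12varTRIADSub}) the TRIAD uses $k$ chains arranged in a triangular block of $O(k^2)$ unit cells, each chain of length $O(k)$, for a total of $O(k^2)$ qubits. Substituting $k=m\cdot l$ gives $O((m\cdot l)^2)$ qubits per cluster, and summing over the $n$ clusters yields the claimed $O(n\cdot(m\cdot l)^2)$.

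The second step is to verify that this count is not inflated by the remaining parts of the physical mapping. Within a cluster the TRIAD connects every pair of chains, so all quadratic couplings required by $E_M$ and by the intra-cluster part of $E_S$ are already available; step two of the embedding only sets coupling strengths and single-qubit weights and adds no qubits, and the chain-consistency terms $E_B$ of step three live entirely on couplings internal to already-existing chains. The sparse inter-cluster couplings coming from $E_S$ between different clusters can be routed by placing adjacent TRIAD blocks so that they share unit-cell boundaries, exactly as in Figure~\ref{alternativePatternFig}; being sparse, they contribute only a lower-order number of qubits. Combining this upper bound with the lower bound above gives $\Theta(n\cdot(m\cdot l)^2)$.

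The main obstacle I anticipate is the first key step: arguing precisely that a TRIAD on $k$ logical variables costs $\Theta(k^2)$ physical qubits -- that $k$ chains of length $\Theta(k)$ really do realize all $\binom{k}{2}$ pairwise connections and that the triangular layout tiles the Chimera unit-cell grid with only a constant-factor overhead. A secondary subtlety is making the informal sparsity assumption on inter-cluster work sharing quantitative enough that the cross-cluster couplings provably cost $o(n\cdot(m\cdot l)^2)$ qubits rather than merely being described as ``few''.
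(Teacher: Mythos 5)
Your proposal is correct and follows essentially the same route as the paper: the heart of both arguments is that one TRIAD per cluster on $k=m\cdot l$ chains costs $\Theta(k^2)$ qubits (the paper proves this quadratic growth by induction on the pattern size), and multiplying by the $n$ clusters gives the bound. The only cosmetic difference is that you import the matching lower bound from the preceding $\Omega(n\cdot(m\cdot l)^2)$ theorem, whereas the paper's sketch reads both bounds off the fixed size of the pattern itself; your additional checks that weight-setting, chain-consistency terms, and the (assumed absent or sparse) inter-cluster couplings add no qubits are consistent with, and slightly more explicit than, the paper's sketch.
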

\begin{proof}[Sketch]
The plans in each cluster are mapped to a TRIAD pattern. We can prove by induction that the number of qubits required by a TRIAD grows quadratically in the number of chains. The number of plans per cluster is $m\cdot l$ so the number of qubits per cluster is in $\Theta((m\cdot l)^2)$. Multiplying by the number of clusters yields the result.
\end{proof}

We see that the asymptotic number of qubits required by our physical mapping matches the lower bound. We finally analyze the time complexity of the preprocessing phase that is executed on a classical computer.

\begin{theorem}
Calculating the physical energy formula is in $O(n\cdot(m\cdot l)^2)$ time.
\end{theorem}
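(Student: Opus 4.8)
The plan is to bound the running time of the two stages of Algorithm~\ref{mainAlg} that together build the physical energy formula --- the logical mapping and the physical mapping --- and to show each stage runs in $O(n\cdot(m\cdot l)^2)$ time under the simplifying assumption used in the preceding theorems (all $m\cdot l$ plans within a cluster mutually connected, no connections across clusters). As a sanity benchmark, the previous theorem already shows the physical energy formula uses $\Theta(n\cdot(m\cdot l)^2)$ qubits, and since each qubit participates in at most a constant number of couplings (degree at most six in the Chimera graph), the formula has $\Theta(n\cdot(m\cdot l)^2)$ nonzero terms; so merely emitting the formula costs $\Theta(n\cdot(m\cdot l)^2)$, and the task is to show that no stage does asymptotically more work.

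First I would bound the logical mapping. The formula $w_LE_L+w_ME_M+E_C+E_S$ has $n\cdot m\cdot l$ linear terms each in $E_L$ and $E_C$; at most $\binom{l}{2}$ quadratic terms per query in $E_M$, hence $O(n\cdot m\cdot l^2)$ overall; and, by the clustering assumption, at most $\binom{m\cdot l}{2}$ quadratic terms per cluster in $E_S$, hence $O(n\cdot(m\cdot l)^2)$ overall. Enumerating these terms while reading off the inputs $c_p$ and $s_{p1,p2}$ is $O(n\cdot(m\cdot l)^2)$. The only extra computation is the scaling factors: $w_L$ needs $\max_{p\in P}c_p$, one pass in $O(n\cdot m\cdot l)$; and $w_M$ needs $\max_{p1\in P}\sum_{p2\in P}s_{p1,p2}$, where again the sum ranges only over plans inside $p1$'s cluster, so it is accumulated in $O(n\cdot(m\cdot l)^2)$ alongside the enumeration of $E_S$.

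Next I would bound the physical mapping along its three steps. Step one lays out the clustered-TRIAD pattern, whose shape is fixed once the numbers of chains are known; by the previous theorem it uses $\Theta(n\cdot(m\cdot l)^2)$ qubits and is generated in time linear in that count. Step two distributes each linear weight $w_iX_i$ uniformly over the $O(m\cdot l)$ qubits of the chain for $X_i$ ($O(m\cdot l)$ per logical variable, $O(n\cdot m\cdot l)$ variables, so $O(n\cdot(m\cdot l)^2)$), and, for each quadratic term $w_{ij}X_iX_j$, looks up the structurally predetermined coupling joining the two chains and updates it in $O(1)$, over $O(n\cdot(m\cdot l)^2)$ such terms. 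Step three adds, per chain of length $O(m\cdot l)$, the $O(m\cdot l)$ consistency sub-terms $E_B(i)$ and computes the chain factor $w_B=U+\varepsilon$; because each qubit has at most six incident couplings, each $U_{0\rightarrow1}(b)$ and $U_{1\rightarrow0}(b)$ is evaluated in $O(1)$, so $U$ for a chain costs $O(m\cdot l)$ and all chains together $O(n\cdot(m\cdot l)^2)$. Summing the three steps preserves the bound.

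I expect the main obstacle to be the amortized accounting rather than any isolated hard argument: one must verify (i) that each of the $\Theta(n\cdot(m\cdot l)^2)$ physical terms is produced with only amortized constant overhead --- in particular that placing a quadratic coupling needs no search along a chain, which relies on the regular, precomputable layout of the TRIAD pattern; (ii) that the bounds $w_L$, $w_M$ and the per-chain factors $w_B$ are obtained without any pass exceeding $O(n\cdot(m\cdot l)^2)$, where the clustering assumption is precisely what keeps $w_M$'s defining expression from costing $\Theta((n\cdot m\cdot l)^2)$; and (iii) that the bounded Chimera degree is what makes the energy-increase bounds of step three cheap. Once these are checked, adding the per-stage bounds gives the claim.
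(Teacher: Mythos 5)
Your proposal is correct and follows essentially the same route as the paper's proof: count the $O(n\cdot m\cdot l)$ linear and $O(n\cdot(m\cdot l)^2)$ quadratic sub-terms of the logical formula under the clustering assumption, observe that $w_L$ and $w_M$ are computed within the same bound, and then charge the physical mapping at constant (or chain-linear) cost per qubit and per weight using the regular pattern layout and the bounded qubit degree, including the per-chain factors $w_B$. Your version is slightly more explicit about the three physical-mapping steps and the $O(m\cdot l)$ chain lengths, but the decomposition and counting are the same as in the paper.
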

\begin{proof}
We first analyze the time complexity of the logical mapping. The energy formula consists of the terms $E_L$, $E_M$, $E_C$, and $E_S$. The time complexity for calculating the weights for those terms is proportional to the number of linear and quadratic sub-terms plus the complexity of calculating scaling factors. The terms $E_L$ and $E_C$ contain $n\cdot m\cdot l$ sub-terms respectively, term $E_M$ contains $O(n\cdot m\cdot l^2)$ sub-terms, and term $E_S$ contains $O(n\cdot (m\cdot l)^2)$ sub-terms. Calculating $w_L$ requires to determine the maximum out of $n\cdot m\cdot l$ cost values, calculating $w_M$ based on $w_L$ requires to determine the maximum out of $n\cdot m\cdot l$ sums over $O(m\cdot l)$ cost saving values with complexity $O(n\cdot (m\cdot l)^2)$. The total time complexity of the logical mapping phase is $O(n\cdot (m\cdot l)^2)$.

Now we analyze the complexity of the physical mapping. Due to the regularity of the employed patterns, identifying the qubits associated with a logical variable takes linear time in the number of qubits. Weights on and between qubits can be added in constant time per weight. Calculating the scaling factor $w_B$ for a group $B$ of qubits requires to examine the connections of each qubit in $B$. As each qubit is connected to a constant number of other qubits, the time for calculating $w_B$ is linear in $|B|$. In summary, we must calculate scaling factors for $O(n\cdot(m\cdot l)^2)$ qubits and assign $O(n\cdot (m\cdot l)^2)$ weights. The combined complexity of logical and physical mapping is $O(n\cdot(m\cdot l)^2)$.
\end{proof}

We find that the time complexity of the transformation from MQO problems into qubit weight assignments is a low-order polynomial in the MQO problem dimensions. We have not taken into account the complexity of clustering queries, generating alternative plans for each query, and identifying work overlap. This pre-processing step is however required by other MQO optimization methods as well~\cite{Le2012} and its implementation is orthogonal to the selection of optimal plan combinations. If it is initially unclear how many clusters are required then the mapping algorithm can be invoked iteratively for a decreasing number of clusters until the mapping is successful (which is assured for one cluster). Then the time complexity is multiplied by the number of iterations. 

%TODO: simplified by not modeling broken qubits

%TODO: give more meaning to analysis

\section{Experimental Evaluation}
\label{experimentalSec}

We were granted a limited amount of computation time on a D-Wave 2X adiabatic quantum annealer with over 1000 qubits that is currently located at NASA Ames Research Center in California. We evaluated its performance on MQO problem instances that have been transformed into mathematical formulas as described before.

Our current approach transforms one MQO problem instance into one QUBO problem instance while we might consider approaches mapping one MQO problem into series of QUBO problems in future work. The size of the problems that can be treated by our current approach is inherently limited by the number of available qubits. The formulas established in the last section can be used to calculate the limits on the MQO problem dimensions until which our approach is applicable. It is clear, without performing any experiments, that there are classes of MQO problems that can be treated by existing MQO algorithms (e.g., 500 queries with three plans or more per query~\cite{Bayir2007}) but not on a quantum annealer with 1097 qubits. This is why we focus our experiments on the opposite question: are there also classes of MQO problems where finding the optimal solution requires non-negligible optimization time on commodity computers and where the quantum annealer outperforms existing approaches? 

This question is interesting since a positive answer would constitute evidence that future models of the quantum annealer with more qubits can become an interesting alternative to classical MQO optimizers and the number of qubits has so far been steadily doubling from one model to the next. The question is also non-trivial and experiments are required to answer it: while absolute optimization times are expected to be lower for the quantum annealer than for commodity computers when optimizing the problem class that is natively supported by the quantum annealer~\cite{Rønnow2014}, the blowup in problem representation size during logical and physical mapping might in principle offset that advantage.

We answer the aforementioned question in the following. Section~\ref{experimentalSetupSub} describes our experimental setup while Section~\ref{experimentalResultsSub} describes and discusses our experimental results.

\subsection{Experimental Setup}
\label{experimentalSetupSub}

We use a D-Wave 2X quantum annealer as described in Section~\ref{backgroundSec}. We use the default time of 129 microseconds per annealing run and 247 microseconds per read-out such that an annealing run with following readout takes 376 microseconds. For each test case, we perform 1000 annealing runs that are partitioned into 10 batches of 100 annealing runs per gauge transformation. A gauge transformation~\cite{Boixo2012} selects for each qubit the physical state representing a one randomly between the two available states. Using multiple gauge transformations reduces the effect of small biases favoring one qubit state over another. 

%TODO: Verify those numbers!

We compare our approach based on quantum annealing against other optimization algorithms that have been recently proposed for MQO: integer linear programming~\cite{Dokeroglu}, genetic algorithms~\cite{Bayir2007}, and iterated hill climbing~\cite{Dokeroglu2015}. We compare against a commercial integer linear programming solver that we use in two ways: we use it to solve MQO problems directly and we use it to minimize the energy formula that the quantum annealer minimizes, too. We use a linear reformulation of the quadratic energy formula that is more suitable for integer programming solvers~\cite{Dash2013}.

Our heuristic algorithms are implemented in Java (while the integer linear programming solver is implemented in C). We use the Java Genetic Algorithms Package\footnote{\url{http://jgap.sourceforge.net/}} in version 3.6.3 with the default configuration which is a genetic algorithm with single point crossover and a top-n selection strategy. The crossover rate is 0.35 and the mutation rate 1/12. We try different population sizes in our experiments. Our hill climbing algorithm iteratively generates plan selections randomly and improves them via hill climbing until a local optimum is reached. We follow good practices for benchmarking Java  programs\footnote{\url{http://www.ibm.com/developerworks/library/j-benchmark1/}} and execute a code warmup of at least 10 seconds for each algorithm before starting the actual benchmark. All Java-based algorithms were implemented in Java~1.7 and executed using the Java HotSpot(TM) 64-Bit Server Virtual Machine version on an iMac with  i5-3470S 2.90GHz CPU and 16~GB of DDR3 RAM. 

We focus on the core optimization problem and neither perform common sub-expression identification nor query clustering. We consider test cases that map well to the quantum annealer for the reasons outlined before. We vary the number of alternative plans per query and generate test cases for the maximal number of queries that can be represented with the available number of qubits. Each query forms one cluster. The weights between qubits representing different plans for the same query are determined by our mapping scheme. The weights between qubits representing plans of different queries represent cost savings and are chosen randomly. 

%TODO: Make clearer how the problem is formalized for standard linear and heuristic approaches

\subsection{Experimental Results}
\label{experimentalResultsSub}

% number queries, number plans, test case number
\def\timeCostPlot#1#2#3{
\nextgroupplot[legend to name=timeCostLegend, legend entries={LIN-MQO, LIN-QUB, QA, CLIMB, GA(50), GA(200)}, legend columns=3]
\addplot[mark=triangle, mark size=2, draw=blue] table[header=true, col sep=comma] {plotsdata/linear/Q#1P#2IfalseT#3_scaledCostCurve};
\addplot[mark=o, mark size=1, draw=blue] table[header=true, col sep=comma] {plotsdata/qubolinear/Q#1P#2IfalseT#3_scaledCostCurve};
\addplot[mark=x, mark size=1, draw=red] table[header=true, col sep=comma] {plotsdata/dwave/Q#1P#2Ifalse_T#3_scaledCostCurve};
\addplot[mark=diamond, mark size=2, draw=black] table[header=true, col sep=comma] {plotsdata/heuristics/Q#1P#2Ifalse_CLIMBT#3_scaledCostCurve};
\addplot[mark=square, mark size=1, draw=brown] table[header=true, col sep=comma] {plotsdata/heuristics/Q#1P#2Ifalse_GEN50T#3_scaledCostCurve};
\addplot[mark=asterisk, mark size=2, draw=orange] table[header=true, col sep=comma] {plotsdata/heuristics/Q#1P#2Ifalse_GEN200T#3_scaledCostCurve};
}

% number queries, number plans
\def\timeCostAllPlots#1#2{
% first plot row
\timeCostPlot{#1}{#2}{0}
\timeCostPlot{#1}{#2}{1}
\timeCostPlot{#1}{#2}{2}
\timeCostPlot{#1}{#2}{3}
\timeCostPlot{#1}{#2}{4}
\timeCostPlot{#1}{#2}{5}
\timeCostPlot{#1}{#2}{6}
\timeCostPlot{#1}{#2}{7}
\timeCostPlot{#1}{#2}{8}
\timeCostPlot{#1}{#2}{9}
% second plot row
\timeCostPlot{#1}{#2}{10}
\timeCostPlot{#1}{#2}{11}
\timeCostPlot{#1}{#2}{12}
\timeCostPlot{#1}{#2}{13}
\timeCostPlot{#1}{#2}{14}
\timeCostPlot{#1}{#2}{15}
\timeCostPlot{#1}{#2}{16}
\timeCostPlot{#1}{#2}{17}
\timeCostPlot{#1}{#2}{18}
\timeCostPlot{#1}{#2}{19}
}

\def\timeCostFigure#1#2{
\begin{figure}[t!]
\centering
\begin{tikzpicture}
\begin{groupplot}[group style={group size=4 by 5, x descriptions at=edge bottom, y descriptions at=edge left, horizontal sep=1pt, vertical sep=1pt}, xmode=log, width=3.375cm,
xlabel=Time (millisec), ylabel=Cost, xlabel near ticks, ylabel near ticks, 
xmajorgrids, %ytick=data, ymajorgrids, 
%try min ticks=2, max space between ticks=200,
xtickten={0,1,2,3,4,5}, ticklabel style = {font=\tiny}, xticklabel style={rotate=15}, label style = {font=\tiny}, legend style={font=\tiny}]
\timeCostAllPlots{#1}{#2}
\end{groupplot}
\end{tikzpicture}

\ref{timeCostLegend}
\caption{Solution cost as a function of optimization time for 20 MQO problem instances with #1 queries and #2 plans per query.\label{Q#1P#2Fig}}
\end{figure}
}

\timeCostFigure{537}{2}
\timeCostFigure{253}{3}
\timeCostFigure{140}{4}
\timeCostFigure{108}{5}

We compare optimization approaches in terms of how solution quality, measured by the scaled execution cost of the current plan selection, evolves as a function of optimization time. We measure execution cost at regular time intervals, after 1, 10, 100, 1000, $10^4$, and $10^5$ milliseconds. For the quantum annealer, we report the execution cost of the best solution found after each batch of 10 annealing runs in the following figures (this information is generated by default and using it does not introduce measurement overheads). 

Figures~\ref{Q537P2Fig} to \ref{Q108P5Fig} show the performance results for between two and five alternative plans per query and the associated maximal number of queries that can be treated using the available qubits (between 537 queries for two plans and 108 queries for five plans). Note that the x-axis, on which optimization time is represented, is logarithmic. Each figure shows detailed results for each out of 20 test cases. We chose to represent performance results for single test cases to give an intuition about how consistent the performance differences between the compared approaches are. The figure legends use the abbreviations QA for quantum annealer, LIN-MQO for linear solver applied to MQO problem instances, LIN-QUB for linear solver applied to QUBO instances, CLIMB for iterated hill climbing, and GEN(50) and GEN(200) for the genetic algorithm with population size 50 and 200 respectively. 

\begin{table}[t!]
\centering
\caption{Milliseconds until finding the optimal solution via integer linear programming.\label{optimalSolutionTimeTable}}
\begin{tabular}{llll}
\toprule[1pt]
\textbf{\# Queries} & \textbf{Minimum} & \textbf{Median} & \textbf{Maximum}\\
\midrule[1pt]
537 & 9261 & 25205.5 & 34570 \\
\midrule
253 & 129 & 178.5 & 206 \\
\midrule
140 & 45 & 128 & 241 \\
\midrule
108 & 47 & 48 & 51 \\
\bottomrule[1pt]
\end{tabular}
\end{table}

We first discuss the results shown in Figure~\ref{Q537P2Fig}. The corresponding class of test cases with 537 queries is the hardest class among the ones we consider if judging hardness by the time it takes to find the optimal solution using the linear solver directly on the MQO problem instance (Table~\ref{optimalSolutionTimeTable} shows aggregates of the time it takes to find the optimal solution depending on the number of queries). 

Among the approaches executed on a classical computer, the integer linear programming solver achieves the best results in the optimization time range between 1 and 100 seconds. The performance is clearly better when solving the MQO problem directly instead of the QUBO representation that was derived from it. This is to be expected as the MQO representation leads to a smaller search space than the QUBO representation: only the QUBO representation allows to represent invalid solutions where multiple or no plans are selected for some queries which leads to an exponential blowup in search space size in the number of alternative plans per query. 

The solutions produced by the randomized algorithms are clearly inferior to the ones found by the linear solver after one second of optimization time. Before that time, the hill climbing algorithm often produces slightly better solutions than the linear solver. Over the long term, the hill climbing algorithm is however beaten by the genetic algorithm. This is intuitive as the hill climbing algorithm is the simplest one among all compared approaches. 

%TODO: number of threads?

All classical approaches have in common that solution quality improves significantly over a time span of several seconds. This is different for the quantum annealer. The execution cost of the solution found after the first annealing run is relatively close to the best execution cost found after 1000 runs with an average cost reduction of 1.5\% from run one to run 1000. As the cost of the final solution after 1000 annealing runs is very close to the optimal solution found by the linear solver (with an average cost overhead of $0.4$\%), this means that the quantum annealer produces good solutions very quickly compared to the other approaches. 

More precisely, in 13 out of 20 test cases, the quantum annealer finds a solution after one annealing run (which takes less than half a millisecond) that is better or equivalent to the solutions found by all other approaches after 10 seconds. The best solution obtained during the first 10 annealing runs is in 18 out of 20 test cases at least equivalent to the solutions generated after 10 seconds by the other approaches. The solution returned by the first annealing run is for all test cases at least equivalent to the solutions generated by the other approaches after one second. This shows that there is a range of MQO problems in which the quantum annealer consistently outperforms the other approaches with a speedup of more than factor 1000.

%TODO: remark on range of plans being relevant

The performance advantage of the quantum annealer over the other approaches gradually decreases as we increase the number of plans per query which decreases the number of queries that can be represented with the available number of qubits. Figures~\ref{Q253P3Fig} to \ref{Q108P5Fig} show that development. In Figure~\ref{Q108P5Fig}, the quantum annealer is superior in the optimization time range up to 10 milliseconds while the linear solver finds the optimal solution in less than 100 milliseconds. 

We attribute this effect to two related reasons: First, as shown by our analysis in Section~\ref{analysisSec}, increasing the number of alternative plans per query increases the number of qubits required for representing one single logical variable quickly. This means that the search space size of the problems that can be mapped to the quantum annealer decreases. Experimenting with easier problems generally tends to decrease the performance gap between optimization algorithms. On the other side, the ratio between QUBO solutions representing invalid MQO solutions and QUBO solutions representing valid MQO solutions increases exponentially in the number of plans per result. Hence the drawback of having to work with a reformulation of the original problem becomes more significant as the number of plans increases. 

In summary, we have identified a class of MQO problems where the quantum annealer in combination with our mapping method outperforms approaches on classical computers in finding near-optimal solutions by several orders or magnitude. This performance advantage decreases however quickly once the problem instances are less convenient to represent as QUBO problems.

% Q537: relative cost overhead rounded 4 promille
% Q253: 6 %
% Q140: 4 %
% Q108: 10 %

% 10 D-Wave samples better than 1 second of linear optimization
% Q537: 20
% Q253: 0
% Q140: 0
% Q108: 0

% 10 D-Wave samples better than 10 seconds of linear optimization
% Q537: 18
% Q253: 0
% Q140: 0
% Q108: 0

% Time for finding optimal solution
% Q108: 47 - 51 miliseconds
% Q140: 82 - 241 miliseconds
%  Q253: 129 - 206 milliseconds
%  Q537: 9 - 35 seconds

%TODO: we only report every tenth point for QA on plots!

\section{Related Work}
\label{relatedSec}

Our work relates to prior work on MQO, to publications showing how to solve specific problems using a quantum computer, and to experimental evaluations of quantum annealers for specific problem classes. 

The MQO problem~\cite{Sellis1988a} is a classical database-related optimization problem. The goal of MQO is to reduce execution cost by sharing work among queries. This requires preparatory steps such as identifying common expressions among queries and generating alternative plans for each query~\cite{Diego, Jarke1985}. The optimization problem of selecting an optimal combination of plans for execution is orthogonal to the problem of identifying common sub-expressions and generating plans that allow to exploit them. We focus on plan selection. 

Various approaches have been proposed for selecting an optimal combination of plans in MQO. The first generation of MQO approaches were branch-and-bound algorithms or based on the A-* algorithm~\cite{Cosar1993a, Sellis1988a, Shim1994}. Such approaches scale only to a limited number of queries~\cite{Bayir2007} which motivates the use of randomized algorithms such as genetic algorithms~\cite{Bayir2007, Dokeroglu2015} or efficient greedy heuristics such as hill climbing~\cite{Dalvi2003, Dokeroglu2015, Kementsietsidis2008, Mistry2000,  Roy1999}. Approaches based on integer linear programming~\cite{Bellatreche2013, Dokeroglu} have been shown to outperform prior algorithms~\cite{Dokeroglu} if the goal is to find optimal MQO solutions. We selected a representative subset of recently proposed MQO approaches for our experimental evaluation. We did not consider approaches that target specific scenarios (e.g., SPARQL processing~\cite{Le2012}) or approaches that are based on a different problem model than the one we consider (e.g., representations based on And-Or-Dags~\cite{Roy1999}).

Our work connects to other publications showing how to solve specific problems on quantum computers, including for instance database search~\cite{Grover1996}, classification~\cite{Rose2008}, calculation of Ramsey numbers~\cite{Bian2013, Gaitan2012}, some of the classical NP-hard optimization problems~\cite{Lucas2014}, fault detection~\cite{Perdomo-Ortiz2015}, job shop sheduling~\cite{Venturelli2015}, or protein folding~\cite{Perdomo-Ortiz2012a}. Authors affiliated with NASA have recently studied how to solve several optimization problems that are relevant in the context of NASA's future deep  space missions on an adiabatic quantum annealer~\cite{Williams}. None of the aforementioned publications treats however the problem of MQO. Furthermore, not all of the aforementioned publications feature an experimental evaluation. % or any other optimization problem that is specific to the database domain (database search is not an optimization problem)

One of the first performance evaluations that compares an adiabatic quantum annealer against classical optimizers was published in 2013~by McGeoch et al.~\cite{Mcgeoch2013a}. The evaluated quantum annealer is an earlier version of the one we use in our evaluation; our annealer increases the number of qubits by roughly one order of magnitude compared to the version from 2013 which allows to treat significantly larger search spaces. McGeoch et al.\ compare the quantum annealer against an integer programming solver in terms of the time it takes to find optimal solutions. While the quantum annealer outperforms the integer programming solver by several orders of magnitude, an alternative representation of the integer programming problem has been shown later to decrease that performance gap significantly~\cite{Dash2013}. We use the optimized representation in our experiments. 

A quantum annealer with 512 qubits, the predecessor of the one we experimented with, was recently compared against classical algorithms by multiple groups~\cite{Hen2015, King2015a}. The focus of those evaluations was to compare the asymptotic growth of  optimization time until an optimal solution is found between the quantum annealer and traditional optimization algorithms. Results by Hen et al.~\cite{Hen2015} for a class of Ising problems generated without limiting the weights on and between qubits show slight advantages for the D-Wave annealer only for a very small range of test parameters and no speedup for others while results on weight-limited instances~\cite{King2015a} show a robust scalability advantage for the quantum annealer. 

The focus of our experimental evaluation differs in several ways. First, we focus on the MQO problem and not on Ising problems which are natively supported by the quantum annealer.  This is a challenging scenario for the quantum annealer since the approaches we compare against do not suffer from the blowup in search space size when transforming MQO problems into Ising problems. Second, while the other evaluations essentially compare the quantum annealer against hypothetical massively-parallel classical solvers by scaling down the optimization times of classical solvers, we are interested in the raw optimization times realized by existing systems. Finally, while prior evaluations mostly focus on the time until an optimal solution is found, our evaluation is broader as we consider how solution quality evolves as a function of optimization time.

In addition, our work differs from prior evaluations since we use the newest model of the quantum annealer with over 1000 qubits that was very recently released. To the best of our knowledge, the results in this paper are the first performance results for the D-Wave 2X besides an initial publication by affiliates of the company D-Wave~\cite{King2015}.

\section{Conclusion and Outlook}
\label{conclusionSec}

We have shown how the problem of multiple query optimization can be solved using an adiabatic quantum computer. We analyzed our approach formally and evaluated it experimentally, making this one of the first published experimental evaluations of adiabatic quantum annealers with over 1000 qubits. The quantum annealer finds near-optimal solutions faster than various classical optimization approaches in all evaluated scenarios. The speedup reaches up to three orders of magnitude for a subset of evaluated scenarios. Due to the limited number of qubits, we were only able to compare on a relatively narrow range of problem instances. 

Our current mapping approach transforms one MQO problem instance into one QUBO problem instance. We will explore approaches that map one MQO problem instance into a series of QUBO problems in future work which should in principle allow to treat larger problem instances than the ones we have considered in our experiments. Our experimental results are specific to MQO. We plan to address other database-specific optimization problems in future work.

\section{Acknowledgments}

We acknowledge the support of the Universities Space Research Association Quantum AI Lab Research Opportunity Program. This work was supported by ERC Grant 279804 and by a Google PhD Fellowship. We thank Davide Venturelli for his detailed comments on the paper and Sergio Boixo for his help in using the quantum annealer.

\begin{tiny}
\bibliographystyle{abbrv}

\end{tiny}

\end{document}